\documentclass[a4paper,UKenglish,cleveref, autoref, thm-restate]{lipics-v2021}

\pdfoutput=1 
\hideLIPIcs  


\bibliographystyle{plainurl}

\title{Quantum Byzantine Agreement Against Full-information Adversary} 


\author{Longcheng Li}{State Key Lab of Processors, Institute of Computing Technology, Chinese Academy of Sciences, Beijing 100190, China \and School of Computer Science and Technology, University of Chinese Academy of Sciences, Beijing 100049, China}{lilongcheng22s@ict.ac.cn}{https://orcid.org/0000-0002-5259-9807}{}

\author{Xiaoming Sun}{State Key Lab of Processors, Institute of Computing Technology, Chinese Academy of Sciences, Beijing 100190, China \and School of Computer Science and Technology, University of Chinese Academy of Sciences, Beijing 100049, China}{sunxiaoming@ict.ac.cn}{https://orcid.org/0000-0002-0281-1670}{}

\author{Jiadong Zhu\footnotemark[1]}{State Key Lab of Processors, Institute of Computing Technology, Chinese Academy of Sciences, Beijing 100190, China}{zhujiadong2016@163.com}{https://orcid.org/0000-0003-4701-9967}{}

\authorrunning{L. Li, X. Sun, and J. Zhu} 

\Copyright{Longcheng Li, Xiaoming Sun, and Jiadong Zhu} 
\begin{CCSXML}
<ccs2012>
   <concept>
       <concept_id>10003752.10003809.10010172</concept_id>
       <concept_desc>Theory of computation~Distributed algorithms</concept_desc>
       <concept_significance>500</concept_significance>
       </concept>
   <concept>
       <concept_id>10003752.10003753.10003758</concept_id>
       <concept_desc>Theory of computation~Quantum computation theory</concept_desc>
       <concept_significance>500</concept_significance>
       </concept>
 </ccs2012>
\end{CCSXML}

\ccsdesc[500]{Theory of computation~Distributed algorithms}
\ccsdesc[500]{Theory of computation~Quantum computation theory}

\keywords{Byzantine agreement, Quantum computation, Full-information model} 

\category{} 

\relatedversion{} 


\funding{This work was supported in part by the National Natural Science Foundation of China Grants No.~62325210, and the Strategic Priority Research Program of Chinese Academy of Sciences Grant No.~XDB28000000.}


\nolinenumbers 

\EventEditors{Dan Alistarh}
\EventNoEds{1}
\EventLongTitle{38th International Symposium on Distributed Computing (DISC 2024)}
\EventShortTitle{DISC 2024}
\EventAcronym{DISC}
\EventYear{2024}
\EventDate{October 28--November 1, 2024}
\EventLocation{Madrid, Spain}
\EventLogo{}
\SeriesVolume{319}
\ArticleNo{25}
\usepackage{braket}
\usepackage{mdframed}
\usepackage{algorithm}
\usepackage[noend]{algpseudocode}
\usepackage{booktabs}
\usepackage{multirow}
\usepackage[normalem]{ulem} 

\usepackage{thmtools,thm-restate}

\algdef{SL}[REPLICASTATE]{ReplicaState}{0}[1]{\textbf{state} #1}
\algblockdefx[FN]{Fn}{EndFn}%
[3]{\textbf{function} \emph{#1}(#2) : #3}{}
\algblockdefx[PROC]{Proc}{EndProc}%
[2]{\textbf{function} \emph{#1}(#2)}{}
\algblockdefx[UPON]{Upon}{EndUpon}%
[1]{\textbf{upon} #1}{}
\algtext*{EndUpon}
\algtext*{EndFn}
\algtext*{EndProc}

\newcommand{\View}{\mathsf{View}}
\newcommand{\Viewrm}{\textrm{View}}

\newcommand{\Msf}{\mathsf{M}}
\newcommand{\Asf}{\mathsf{A}}
\newcommand{\Rsf}{\mathsf{R}}
\newcommand{\Dsf}{\mathsf{D}}
\newcommand{\Bsf}{\mathsf{B}}
\newcommand{\Ssf}{\mathsf{S}}
\newcommand{\Qsf}{\mathsf{Q}}

\newcommand{\PC}{\mathcal{P}_{C}}
\newcommand{\PQ}{\mathcal{P}_{Q}}
\newcommand{\AC}{\mathcal{A}_{C}}
\newcommand{\AQ}{\mathcal{A}_{Q}}

\newcommand{\Mcal}{\mathcal{M}}

\renewcommand{\star}{*}

\newcommand{\E}{\mathop{\mathbb{E}}}

\begin{document}

\maketitle

\footnotetext[1]{Corresponding author}
\begin{abstract}
We exhibit that, when given a classical Byzantine agreement protocol designed in the private-channel model, it is feasible to construct a quantum agreement protocol that can effectively handle a full-information adversary. Notably, both protocols have equivalent levels of resilience, round complexity, and communication complexity. In the classical private-channel scenario, participating players are limited to exchanging classical bits, with the adversary lacking knowledge of the exchanged messages. In contrast, in the quantum full-information setting, participating players can exchange qubits, while the adversary possesses comprehensive and accurate visibility into the system's state and messages. By showcasing the reduction from quantum to classical frameworks, this paper demonstrates the strength and flexibility of quantum protocols in addressing security challenges posed by adversaries with increased visibility. It underscores the potential of leveraging quantum principles to improve security measures without compromising on efficiency or resilience.

By applying our reduction, we demonstrate quantum advantages in the round complexity of asynchronous Byzantine agreement protocols in the full-information model. It is well known that in the full-information model, any classical protocol requires $\Omega(n)$ rounds to solve Byzantine agreement with probability one even against Fail-stop adversary when resilience $t=\Theta(n)$ \cite{attiya2008tight}. We show that quantum protocols can achieve $O(1)$ rounds (i) with resilience $t<n/2$ against a Fail-stop adversary, and (ii) with resilience $t<n/(3+\epsilon)$ against a Byzantine adversary for any constant $\epsilon>0$, therefore surpassing the classical lower bound.

\end{abstract}

\section{Introduction}

Byzantine agreement (BA) \cite{10.1145/322186.322188}, also referred to as Byzantine fault-tolerant distributed consensus, is a crucial topic in secure distributed computing. In simple terms, in a BA protocol, a group of $n$ players who do not trust each other and possess private input bits, come to a consensus on a shared output bit, even if a subset of size $t$ of the players are corrupted by a malicious adversary, who can force the corrupt parties to deviate from their prescribed programs during the protocol execution. The Byzantine agreement problem has been extensively researched over the past four decades, leading to numerous findings on the feasibility and potential of BA protocols in various settings \cite{doi:10.1137/S0097539790187084,10.1145/31846.42229,10.5555/983102}.

In this paper, we focus on BA that succeeds with probability one
in the full-information model, where the adversary knows the knowledge of all local variables, including quantum states if applicable. It is well
known that in this model, when up to $t$ players may be corrupted, no classical deterministic
protocol can solve synchronous BA in less than $t+1$ rounds even in the presence
of a Fail-stop adversary \cite{10.1145/322186.322188}. It is further proved by
\cite{10.1145/277697.277733} that any classical randomized protocol requires at
least expected $\tilde{\Omega}(\sqrt{n})$ rounds. Given these constraints, it is natural to ask the
following question:
\begin{center}
    \emph{
    Can quantum communication accelerate BA in the full-information model?
    }
\end{center}

The seminal work of \cite{10.1145/1060590.1060662} provides a confirming answer to the above question by constructing a constant round synchronous quantum BA
protocol against the Byzantine adversary,  surpassing the established round complexity lower bound in \cite{10.1145/277697.277733}. The protocol builds upon an expected constant round
classical BA protocol introduced in \cite{doi:10.1137/S0097539790187084}, which is not
resilient against a full-information adversary and requires a private channel.
In their work, \cite{10.1145/1060590.1060662} proposes a quantum modification to
the original classical protocol to make it robust against a full-information
adversary. They achieve this by introducing a novel approach of deferring coin
flips, substituting them with quantum superpositions until after the adversary
has chosen his actions in a certain round. Notably, the modification does not
change the structure of the original classical protocol and therefore preserves
its constant round complexity. \cite{10.1145/1060590.1060662} also extends the synchronous quantum protocol to the
asynchronous case, but with suboptimal resilience $t<n/4$. 


\cite{10.1145/1060590.1060662} demonstrates an elegant method of reducing quantum full-information protocols to classical private-channel protocols while maintaining key attributes such as resilience, round complexity, and communication complexity. This approach offers a valuable means of evaluating the quantum advantage in the full-information model by comparing classical full-information and classical private-channel models. By highlighting the notable distinctions between these two classical models, they underscore the substantial quantum advantage inherent in the full-information domain. 

In light of these findings, \cite{10.1145/1060590.1060662} raises the question of whether their reduction strategy could be applied to other settings, such as low round complexity asynchronous BA protocols with resilience $n/4 \leq t<n/3$, to further investigate potential quantum advantages. Unfortunately, limited progress has been made on this issue since its introduction. This paper seeks to tackle this challenge from a comprehensive viewpoint. Instead of narrowly focusing on the reduction of quantum protocols to classical protocols in a specific setting (e.g., the asynchronous protocol with resilience $n/4 \leq t<n/3$, which is better than that in \cite{10.1145/1060590.1060662}), our objective is to address the following question:
\begin{center}
    \emph{Is it possible to convert \textbf{any} classical private-channel BA protocol to a quantum full-information BA protocol  while preserving the same characteristics such as\\ resilience, round complexity, and communication complexity?}
\end{center}


\subsection{Our Contribution}

As our main result, we answer the above question in the affirmative by demonstrating a general reduction from a quantum
full-information BA protocol to a classical private-channel BA protocol:

\begin{restatable}{thm}{maintheorem}
\label{thm:main}
    Given a classical synchronous (resp.~asynchronous) non-erasing BA protocol designed to counter a private-channel Fail-stop (resp.~Byzantine) adversary, we can construct a quantum synchronous (resp.~asynchronous) BA protocol capable of handling a full-information Fail-stop (resp.~Byzantine) adversary while maintaining the same levels of resilience, round complexity, and communication complexity.
\end{restatable}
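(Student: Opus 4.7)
The plan is to design a compiler that transforms any classical private-channel BA protocol $\Pi$ into a quantum full-information BA protocol $\Pi_Q$ with matching resilience, round complexity, and communication complexity. The core observation is that the only advantage the private-channel model enjoys over the full-information model is that the adversary cannot observe local random choices or messages between honest players; quantum mechanics, through superposition and entanglement, provides exactly the tools we need to create shared information that no external observer can extract without interaction, so we can simulate these private resources in a fully visible quantum environment.

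The construction rests on three ingredients. First, every classical local coin flip in $\Pi$ is replaced by a qubit prepared in $|+\rangle$ and kept in the player's register; all subsequent dependent computations are carried out coherently, and the qubit is measured only at the moment the original protocol would communicate or irreversibly act on its value. Second, private messages between honest players are implemented via shared entanglement: $\Pi_Q$ distributes EPR pairs between every pair of players in the very first round, piggy-backed on $\Pi$'s first classical round so no extra round is incurred. To send a private bit $m$, the sender measures her half of a fresh EPR pair to obtain a uniformly random key $k$ and publicly transmits $m \oplus k$; the receiver decrypts with his own measurement outcome. Third, in the Byzantine case, quantum authentication codes are applied to every quantum transmission so that a corrupted intermediary cannot undetectably substitute or rebind entangled states.

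Correctness would be established by a simulation argument. Given any quantum full-information adversary $\mathcal{A}_Q$ against $\Pi_Q$, I would build a classical private-channel adversary $\mathcal{A}_C$ against $\Pi$ that runs $\mathcal{A}_Q$ in a quantum sandbox: in place of each unobservable honest-to-honest private message, $\mathcal{A}_C$ supplies $\mathcal{A}_Q$ with a uniformly random bit string, indistinguishable from the genuine ciphertext by the quantum one-time pad property, and emulates honest halves of shared EPR pairs with maximally mixed qubits matching $\mathcal{A}_Q$'s true marginal. The non-erasing assumption is crucial here: it guarantees that honest players retain intermediate quantum states so that deferred measurements are well-defined and produce the same distribution of classical randomness as in $\Pi$. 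Consequently, the joint distribution of honest outputs is identical in the two worlds, and every guarantee of $\Pi$ (validity, agreement, termination probability, round and communication bounds) transfers to $\Pi_Q$.

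The hardest part will be the Byzantine asynchronous setting, where the adversary jointly controls message delivery schedules and corrupted players' quantum actions. There one must argue that quantum authentication, combined with the entanglement distribution subprotocol, prevents any quantum attack from inducing behavior not reachable by a legal private-channel Byzantine adversary, handling delicate situations such as delayed EPR delivery across rounds, partial measurement of shared entanglement to extract side information, or selective dropping of authenticated messages. In each case the proof must show that any deviation is either caught by authentication (collapsing to an abort which a classical adversary could also trigger by withholding a message) or leaves the honest output statistics unchanged, so that the induced classical private-channel adversary truly simulates $\mathcal{A}_Q$'s advantage.
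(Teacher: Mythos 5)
There is a genuine gap, and it lies at the heart of your second and third ingredients. In the full-information model the adversary sees every honest player's \emph{local state} (the exact pure state of the whole system), not merely traffic on the channels. Your compiler measures each coin ``at the moment the original protocol would communicate or irreversibly act on its value''; at that instant the outcome becomes part of the (collapsed) system state, so the full-information adversary learns it before the message is even sent. Encrypting the transmission with an EPR-derived one-time pad then buys nothing: the adversary already knows the plaintext $m$ and the key $k$ from the sender's post-measurement state, so privacy of the channel is irrelevant. For the same reason quantum authentication codes cannot be used here --- their keys are local classical variables, hence visible to (and forgeable by) a full-information Byzantine adversary. This also breaks your simulation argument: $\mathcal{A}_C$ would have to present to $\mathcal{A}_Q$ the full pure state of the honest players, which in your construction depends on collapsed coin values and plaintexts that a private-channel adversary does not know, so ``replace ciphertexts by uniform strings'' does not faithfully emulate what a full-information attacker observes. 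Concretely, in the motivating common-coin protocol your measurement rule would reveal the leader value $l_i$ as soon as it is first communicated, letting the adaptive adversary corrupt the eventual leader --- exactly the attack the quantization is supposed to prevent.

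The fix, and the route the paper takes, is to never measure message contents at all: purify all randomness, compute the next-message function reversibly with a unitary, and transmit the message \emph{registers} themselves as quantum messages, measuring only the decision value $d_k$ and the message pattern $b_k$. Then the pure state of the system is a deterministic function of $(r_A,\{b_k\},\{d_k\})$ (plus the adversary's own measurement outcomes $a_k, a_k'$ in the Byzantine case), all of which are available to a classical private-channel adversary, so $\mathcal{A}_C$ can reconstruct exactly what $\mathcal{A}_Q$ sees and simulate it round by round: it classically simulates the corrupted players' quantum registers, measures corrupt-to-honest messages before delivery (harmless, since honest players keep copies of received messages, so those registers are effectively traced out of the corrupted system anyway), and re-prepares the corrupted-side states from a stored communication transcript when new players are adaptively corrupted. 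No EPR pre-distribution, encryption, or authentication is needed, and the non-erasing hypothesis is used only to justify honest players keeping their entire view in memory so the computation can be made reversible, not to make deferred measurements ``well-defined.''
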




It is crucial to emphasize that the theorem we present is applicable under the condition that the classical protocol forming the foundation of our quantum protocol is \emph{non-erasing}, which means its
security does not rely on the erasure of intermediate states. 
To the best of our knowledge, this criterion is met by all existing classical protocols within
the scope of information-theoretic BA with probability one. 
For a more detailed definition of this concept, please refer to \Cref{def:nonerase} in Section \ref{sec:main} where we will provide a formal explanation. Furthermore, throughout our paper, we consistently assume that the adversary is computationally unlimited and adaptive\footnote{Similar reductions can also be made from the quantum non-adaptive full-information model to the classical non-adaptive private-channel model.}, allowing it to modify its strategy based on the information acquired during the execution of the protocol.


We remark that our reduction does not preserve the local computational
time/space and expected number of communication bits of the original protocol,
as we assume players have unbounded local computational power and remember all
intermediate states. However, for certain classical BA protocols, with a more careful design of the reduction strategy,
it is possible to avoid the blow-up of local computation and the number of
communication bits in the corresponding quantum protocols. For details, we refer interested readers to \Cref{sec:bit-efficient} for more information.

\begin{table}
    \centering
    \begin{threeparttable}
    \caption{Round complexity of Byzantine agreement in the full-information model. 
    }
    \label{tab:ourresult}
    \begin{tabular}{cccccccc}
        \toprule[1.5pt]
        \multirow{2}{*}{Model} & \multirow{2}{*}{Adversary} & \multirow{2}{*}{Resilience} & \multicolumn{2}{c}{Classical}  & ~ &  \multicolumn{2}{c}{Quantum\tnote{a}}\\
        \cline{4-5} \cline{7-8} 
        &  &  & Upper bound & Lower bound & ~ & Upper bound & $\PC$ \\
        \midrule[1.5pt]
        \multirow{2}{*}{Sync.} & Fail-stop & \(t=\Theta(n)\) & \(\tilde{O}(\sqrt{n})\) \cite{10.1145/277697.277733} & \multirow{2}{*}{\(\tilde{\Omega}(\sqrt{n})\) \cite{10.1145/277697.277733}} & ~ & \(O(1)\) \cite{10.1145/1060590.1060662,hajiaghayi2023brief} & \cite{chor1989simple}\\
        & Byzantine & \(t<n/3\) & \(O(n)\) \cite{10.1145/322186.322188} & & ~ & \(O(1)\) \cite{10.1145/1060590.1060662} & \cite{doi:10.1137/S0097539790187084}\\
        \midrule
        \multirow{4}{*}{Async.} & Fail-stop & \(t=\Theta(n)\) & \(O(n)\) \cite{attiya2008tight} & \multirow{4}{*}{\(\Omega(n)\) \cite{attiya2008tight}} & ~ & \textcolor{red}{\(O(1)\) (Our work)} & \cite{10.5555/983102}\\
        & Byzantine & \(t<n/4\) & \(\tilde{O}(n^4)\) \cite{huang2023byzantine} & & ~ & \(O(1)\) \cite{10.1145/1060590.1060662} & \cite{doi:10.1137/S0097539790187084}\\
        & Byzantine & \(t<\frac{n}{3+\epsilon}\)\tnote{b} & \(\tilde{O}(n^4/\epsilon^8)\) \cite{huang2023byzantine} &  & ~ & \textcolor{red}{\(O(1/\epsilon)\) (Our work)} & \cite{bangalore2018almost}\\
        & Byzantine & \(t<n/3\) & \(\tilde{O}(n^{12})\) \cite{huang2023byzantine} &  & ~ & \textcolor{red}{\(O(n)\) (Our work)} & \cite{bangalore2018almost}\\
        \bottomrule[1.5pt]
    \end{tabular} 
    \begin{tablenotes}
       \item[a] Every quantum protocol presented in the table is built upon some classical private-channel protocol $\PC$. The last two columns of the table show the classical private-channel protocols alongside their quantum full-information equivalents for comparison and reference purposes.
       \item[b] 
       Notice that when $\epsilon$ is a constant, the quantum upper bound is $O(1/\epsilon)=O(1)$.
    \end{tablenotes}    
    \end{threeparttable}
\end{table}

By applying our reduction, we obtain several new quantum advantages related to round complexity in the full-information setting. As summarized in Table \ref{tab:ourresult}, our main result enables us to
quantize existing classical private-channel protocols into some quantum
full-information protocols of which the round complexity surpasses the classical
lower bound in the same setting. In particular, we obtain two new quantum
speedups in the asynchronous model:
\begin{itemize}
\item {\bf Fail-stop model:} Section 14.3 of \cite{10.5555/983102} presents a
constant-round classical BA protocol with optimal resilience $t<n/2$ against the
Fail-stop adversary in the private-channel setting. By applying our reduction, we obtain a
constant-round quantum full-information BA protocol with $t<n/2$, while any
classical full-information protocol requires $\Omega(n)$ rounds \cite{attiya2008tight}.
\item {\bf Byzantine model:} 
For any $\epsilon>0$, \cite{bangalore2018almost} presents an $O(1/\epsilon)$-round classical BA
protocol with resilience $t<n/(3+\epsilon)$ against the private-channel
Byzantine adversary. By applying our reduction, we
obtain an $O(1/\epsilon)$-round quantum full-information BA protocol with
resilience $t<n/(3+\epsilon)$. When $\epsilon$ is a constant independent of $n$,
the quantum BA achieves constant rounds, while any classical full-information
protocol requires $\Omega(n)$ rounds \cite{attiya2008tight}. When $\epsilon \leq 1/n$, $\lceil n/(3+\epsilon) \rceil=\lceil n/3 \rceil$, which indicates that $t<n/(3+\epsilon)$ is equivalent to $t<n/3$. By substituting $\epsilon = 1/n$ into $O(1/\epsilon)$, we find that our
quantum BA requires $O(1/\epsilon)=O(n)$ rounds. In comparison, the
best known classical protocol \cite{huang2023byzantine} in the same setting
requires $\tilde{O}(n^{12})$ rounds.
\end{itemize}

\subsection{Technical Overview}\label{subsec:Tech}

We briefly explain the key ideas behind \Cref{thm:main}, especially how to
quantize a classical protocol into a quantum one and how to simulate a quantum
full-information adversary in the classical setting. The key idea is utilizing
quantum superpositions to turn exposed randomness into hidden randomness. 

\subparagraph*{A simple motivating example.}
Before introducing the complicated quantum full-information BA protocol against the Byzantine adversary, \cite{10.1145/1060590.1060662} first presents a simple quantum full-information BA protocol against the Fail-stop adversary, who can corrupt players by halting it and choosing a subset of their messages
to be delivered. This simple protocol follows a common framework of reducing a BA protocol to a common-coin protocol, where all uncorrupted players need to output a common
random coin with constant success probability. We will use the common-coin protocol, as demonstrated in the BA protocol against the Fail-stop adversary in \cite{10.1145/1060590.1060662}, as a motivating example to explain the key idea of our paper. The common-coin protocol works in the quantum full-information setting and draws inspiration from a common-coin protocol in the classical
private-channel setting \cite{chor1989simple}.
In the following discussion, we will start by offering a brief overview of the classical private-channel protocol in \cite{chor1989simple} and explaining its limitations when confronted with a full-information adversary. We then explain how \cite{10.1145/1060590.1060662} effectively resolve this issue by leveraging quantum principles. 

The classical
private-channel protocol in \cite{chor1989simple} works as follows: (i) Each player $i$ picks a random
coin $c_i\in\{0,1\}$ and a random leader value $l_i\in [n^3]$ and then
multicasts $(c_i, l_i)$; (ii) Each player $i$ outputs the coin $c_j$ such that
$l_j$ is the largest leader value $i$ receives. A private-channel Fail-stop
adversary learns nothing about the values of $\{c_i\}$ and $\{l_i\}$, so the
best it can do is to randomly stop $t$ players. Since there are at least $n-t>n/2$
uncorrupted players, the largest leader falls among uncorrupted players with probability~$1/2$, and the
probability of collision of leader values is negligible. Switching to full-information adversary, $\{c_i\}$ and $\{l_i\}$ become known to
the adversary. Then the adversary can corrupt the leader and let only a subset
of players receive the leader's message so that it can break the common-coin
protocol. However, \cite{10.1145/1060590.1060662} shows that the problem can be
fixed if we allow quantumness. Instead of choosing random $c_i$ and $l_i$, we
let player $i$ \emph{purify randomness}, i.e, preparing two $n$-qudit
superposition states
\[\ket{c_i}:=\frac{1}{\sqrt{2}}\left(\ket{00\cdots 0}+\ket{11\cdots 1}\right) \text{ and }
\ket{l_i}:=\frac{1}{\sqrt{n^3}}\sum_{l=1}^{n^3}\ket{l,l,\ldots,l},\]
and then distribute the $n$ qudits of $\ket{c_i}$ and $\ket{l_i}$ among the
players. In the next round, the players measure the qudits they receive and obtain
the classical random coins and leader values. Although the full-information adversary 
can see the pure state of the system, quantum mechanics
prevents it from knowing the random values before measurement. Thus this simple
purified quantum protocol works against the full-information adversary.


\subparagraph*{Generalized reduction in the synchronous model.} Inspired by the
above example, we give a general reduction from quantum full-information BA protocols to classical private-channel
BA protocols. For any
classical BA protocol $\PC$, the local computation of each player at
round $k$ involves (i) preparing some randomness $r_k$,  and (ii) computing a function
$f$ to determine the decided value and messages to be sent. We construct a
quantum protocol $\PQ$ by modifying $\PC$'s local computation to (i) preparing a
quantum state $\sum_{r}\sqrt{\Pr[r_k=r]}\ket{r}$, (ii) applying a unitary $U_f$
to compute $f$ reversibly i.e., $U_f\ket{v}\ket{0}:=\ket{v}{\ket{f(v)}}$ and
send quantum messages. 


We assume that the output of $f$
contains a variable $d_k\in\{0, 1, \perp\}$ indicating the decided value at
round $k$ ($\perp$ if not decided yet). The player in $\PQ$ will
measure the corresponding quantum register of $d_k$ and decide if
$d_k\not=\perp$. In addition, to prevent a communication
blowup, we also assume the output of $f$ includes the message pattern
$b_k\in\{0, 1\}^{n}$ where the $j$-th bit $b_k[j]$ indicates whether to send message to player
$j$. $\PQ$ will measure the register of $b_k$ and send messages only to
players with $b_k[j]=1$. 

\subparagraph*{Security analysis.} To prove that $\PQ$ is secure against a quantum
full-information adversary, we follow the argument that given any quantum
full-information adversary $\AQ$ attacking $\PQ$, we can construct a
classical adversary $\AC$ in the private-channel model that perfectly simulates $(\PQ,
\AQ)$ when interacting with $\PC$. However, one may question why this
simulation is possible since $\AQ$ is apparently more powerful than $\AC$ in two
aspects: 
\begin{enumerate}
\item $\AQ$ is full-information while $\AC$ is private-channel.
\item $\AQ$ is quantum while $\AC$ is classical.
\end{enumerate}

For the first problem, observe that the randomness of $\PQ$ comes solely from
players' measurement results of $\{b_k\}$ and $\{d_k\}$, of which the
corresponding classical variables in $\PC$ are also available to $\AC$.\footnote{Private-channel $\AC$ knows message patterns by definition. We can
also assume $\AC$ knows the decided values of players because if a uncorrupted player
decides in a BA protocol, all other uncorrupted
players will eventually decide the same value. } The pure state view of $\PQ$
is fully determined by $\{b_k\}$ and $\{d_k\}$, so actually $\AQ$ knows no
more than $\AC$ about the state of the system.

For the second problem, we first consider the Fail-stop adversary case to demonstrate why it is not a concern. The ability
of a Fail-stop adversary $\AQ$ is to halt players and choose a subset of their messages
to be delivered, which is essentially classical. Thus $\AC$ can easily simulate
those actions.

 The Byzantine case is trickier because a Byzantine adversary $\AQ$ can apply quantum
operations on the registers of corrupted players. In this case, we let $\AC$
\emph{classically simulate}\footnote{We assume the adversary is computationally unbounded.}
a quantum state on the registers of corrupted players in order to
keep track of $\AQ$'s actions. 
Moreover, when corrupted players (controlled by $\AC$) send messages to uncorrupted players, they cannot simply transmit quantum messages in the manner $\AQ$ does because  players in $\PC$ are not equipped to receive quantum information. 
To circumvent this challenge, we let corrupted players first measure the messages and then send the measurement outcomes, which are classical, to the uncorrupted players. Intuitively, measuring those messages will not affect the simulation because uncorrupted players always keep a copy of messages they receive. After a quantum message is sent to a uncorrupted player, corrupted players are unable to reobtain it, resulting in the message being traced out from the corrupted players' system, which is equivalent to being measured.
There is still one caveat in the simulation of $\AQ$ by $\AC$: because $\AQ$ is adaptive, it can corrupt new
players during the protocol and reobtain the quantum messages sent to them previously, while $\AC$ will
only obtain collapsed classical messages when corrupting new players. 
To fix this, we let $\AC$ maintain a copy $T$ of the communication transcript between uncorrupted
players and corrupted players. By following this approach, when $\AC$ corrupts some new
players, it can replicate the necessary quantum states as per the content stored in $T$. In this way, $\AC$ can perfectly simulate $\AQ$ in the classical setting.

\subparagraph*{Round and communication complexity.} Our construction of $\AC$
actually yields a stronger result: the probability distribution of executions in $(\PQ, \AQ)$ is identical to that of executions in $(\PC, \AC)$.
This leads to the conclusion presented in \Cref{thm:main}.

\subparagraph*{Extending to the asynchronous model.} Our results in the synchronous
model can be extended to the asynchronous model without extra effort. The primary distinction lies in the measurement metric used; while synchronous protocols are evaluated in \emph{rounds}, asynchronous protocols are evaluated in terms of \emph{steps}. In one step, only one uncorrupted player receives a message, then
performs local computation and possibly sends out messages. It is still feasible to purify
the randomness, perform reversible computation in each step, and develop a quantum
full-information protocol.

Although our results are inspired by the Fail-stop protocol in \cite{10.1145/1060590.1060662}, our techniques are new compared with \cite{10.1145/1060590.1060662}, especially in the Byzantine model. In the Byzantine model, [7] involves an intricate procedure of modifying the original classical protocol by replacing its classical verifiable secret sharing (VSS) component with a quantum VSS. In contrast, our approach focuses on demonstrating the efficacy of extracting purified classical randomness, a feature that is applicable to any classical protocol exhibiting a non-erasing property. Therefore, we expect our technique to have a broader range of applications.


\section{Related Work}

We address the construction of a quantum full-information protocol from a
classical private-channel protocol. In this section, we discuss existing
results in closely related contexts and provide a brief overview of their
techniques.

\subparagraph{BA protocols with private channels.} 
The private-channel model is frequently studied in BA problems. In this model, the adversary is unable to access the contents of the messages
exchanged between the participating players. A seminal work
\cite{doi:10.1137/S0097539790187084} presents a synchronous BA protocol that can
withstand up to $t<n/3$ failures and operates within an expected constant number
of rounds. Additional randomized
protocols \cite{10.1145/800222.806744,10.1145/31846.42229} addressing scenarios
where $n/3 \leq t<n/2$ are known, which require extra assumptions like a
public-key infrastructure and a trusted dealer. Due to their dependency on these
supplementary assumptions, these protocols cannot be adapted to the
information-theoretic setting. In the information-theoretic setting,
\cite{10.5555/983102} presents an asynchronous BA protocol that can withstand up
to $t<n/2$ failures while maintaining a constant running time, particularly
effective against the Fail-stop adversary.  
For the Byzantine adversary, \cite{abraham2008almost} introduces a concept
called shunning verifiable secret sharing and gives an asynchronous BA
protocol with optimal resilience $t<n/3$ and $O(n^2)$ running time, which is
later improved to $O(n)$ by \cite{bangalore2018almost}.

\subparagraph{The full-information model.} The full-information model, as
introduced by \cite{4568166}, serves as a framework for investigating collective
coin-flipping within a network of $n$ players with $t$ failures. This model has
spurred a series of research efforts aimed at enhancing fault tolerance and
reducing round complexity in protocols such as those proposed by \cite{743508}
and \cite{814586}. \cite{doi:10.1137/S0097539793246689} considers the problem of
multiparty computation in the full-information model.
\cite{10.1145/1824777.1824788} gives the first asynchronous leader election
protocol in the full information model with constant success probability against
a constant fraction of corrupted players. Asynchronous BA in the
full-information model used to require exponential time to be solved with linear resilience \cite{ben1983another, bracha1987asynchronous}, which is recently improved to polynomial
time by a sequence of works
\cite{10.1145/2837019,10.1145/3519935.3520015,huang2023byzantine}.

\subparagraph{Quantum Byzantine protocols.} 
Besides the work of \cite{10.1145/1060590.1060662}, many works have applied
quantum principles to Byzantine fault tolerance problems, which has led to
significant advancements in the field. A key contribution is made by
\cite{PhysRevLett.87.217901}, who introduces quantum elements to Byzantine
problems by addressing a weaker version called Detectable Byzantine Agreement
(DBA). Their protocol involves three parties and is based on the Aharonov state.
Building upon this work, \cite{PhysRevLett.100.070504} proposes a 3-party DBA
protocol utilizing four-particle entangled qubits. Further research by
\cite{10.1145/571825.571841} shows that the DBA protocol can reach any tolerance
found. 
Other variants of the problem setting \cite{cholvi2022quantum, luo2019quantum,
PhysRevA.92.042302} are considered to ensure feasibility of the problem against
strong Byzantine adversaries. It is also worth mentioning that a recent work
\cite{hajiaghayi2023brief} improves the communication complexity of the
synchronous Fail-stop protocol of \cite{10.1145/1060590.1060662} from $O(n^2)$
to $O(n^{1+\epsilon})$ for any constant $\epsilon>0$ while maintaining constant
running time.

\section{Preliminaries}

\subsection{Quantum Computation} \label{sec:qc}

In this section, we will briefly discuss quantum computation. For a more
in-depth explanation, readers are encouraged to refer to
\cite{Nielsen_Chuang_2010}. 

In quantum computing, a \emph{qubit} serves as the fundamental unit of quantum
information, analogous to a classical bit. A \emph{pure quantum state} in a
quantum system comprising $n$ qubits, is represented by a unit-length vector in
the $2^n$-dimensional Hilbert space. A commonly used basis of the space is the
\emph{computational basis} $\{\ket{i}=\ket{i_1,i_2,\ldots, i_n}:
i_1,\ldots,i_n\in\{0,1\}\}$. Then any pure state $\ket{\psi}$ can be expressed
as \(\sum _{i=0}^{2^n-1} \alpha_i\ket{i},\) where $\alpha_i$ are complex numbers
known as \emph{amplitudes}, satisfying the condition $\sum _i |\alpha_i|^2=1$.
A \emph{mixed quantum state}, also known as a \emph{density matrix}, represents a
probability mixture of pure states. If a quantum system is in state
$\ket{\psi_i}$ with probability~$p_i$, then its density matrix $\rho:=\sum_i p_i
\ket{\psi_i}\bra{\psi_i}$ where $\bra{\psi_i}$ denotes the conjugate transpose
of $\ket{\psi_i}$. Any density matrix is Hermitian and trace one.
In this paper, we also use density matrix to describe classical probability
distribution: If a random variable $X$ takes value $x_i$ with probability~$p_i$, then it can
be described by the density matrix $\sum_i p_i\ket{x_i}\bra{x_i}$.

Transformations in an $n$-qubit quantum system are described by unitary
transformations in the $2^n$-dimensional Hilbert space. Such a transformation is
depicted by a unitary matrix $U$, which satisfies
$UU^\dagger=\mathbb{I}$ where $^\dagger$ is conjugate transpose and
$\mathbb{I}$ is identity matrix. If $U$ is applied to a pure state $\ket{\psi}$,
the state becomes $U\ket{\psi}$. If $U$ is applied to a mixed state $\rho$, the
state becomes $U\rho U^\dagger$. 

Another important operation is quantum measurement. We will only use projective
measurement in our paper. A projective measurement $\mathcal{M}$ is described by
a collection of orthogonal projectors $\{\Pi_i\}$ such that $\sum_i
\Pi_i=\mathbb{I}$. When
$\mathcal{M}$ is applied on a pure state $\ket{\varphi}$, it collapses to state
$\frac{1}{\sqrt{\beta}}\Pi_i\ket{\varphi}$ with probability
$\beta=\bra{\varphi}\Pi_i\ket{\varphi}$. In the language of density matrix, we
have $\mathcal{M}(\rho)=\sum_i \Pi_i\rho \Pi_i$. In particular, the
\emph{computational basis measurement} has projectors $\{\ket{i}\bra{i}:0\leq
i<2^n\}$. If a quantum state $\sum_{i}\alpha_i\ket{i}$ is measured in computational
basis, it collapses to state $\ket{i}$ with probability~$|\alpha_i|^2$. Measurement can also
be conducted on a portion of the system or on select qubits within the system.
For instance, the measurement restricted on the first qubit of a $n$-qubit system has
projectors $\{\ket{0}\bra{0}\otimes \mathbb{I}_{2^{n-1}}, \ket{1}\bra{1}\otimes
\mathbb{I}_{2^{n-1}}\}$ where $\mathbb{I}_{2^{n-1}}$ is the identity operator on the last $n-1$ qubits.


\subsection{Byzantine Agreement Problem}
In a Byzantine agreement problem, $n$ distinct players labeled from $1$ to $n$ need
to reach a decision on the value of a bit. Each player $i$ inputs a bit $x_i\in
\{0, 1\}$ and must decide an output bit in $\{0, 1\}$ that satisfies the
following conditions:
\begin{enumerate}
    \item {\bf Agreement:} All uncorrupted players decide the same value.
    \item {\bf Validity:} If all $x_i$ are the same bit $y$, then all uncorrupted players decide $y$.
    \item {\bf Termination:} All uncorrupted players terminate with probability $1$.
\end{enumerate}
The problem was introduced by Pease, Shostak and Lamport
\cite{10.1145/322186.322188} in 1980. One can consider different network models,
models of inter-player communication, models of local computation, and fault
models. In this paper, the following models are of interest.
\begin{itemize}
    \item {\bf Network Models.} We will consider both synchronous network, where
    all messages are guaranteed to be delivered within some known time $\Delta$
    from when they are sent, and asynchronous network where messages may be
    arbitrarily delayed.
    \item {\bf Models of Inter-player Communications.} 
    Every two players are connected by a transmit reliable\footnote{Messages will not be corrupted or lost during transmission.} 
    channel. We consider two different communication paradigms: classical and
    quantum. In the classical model, players can communicate classical messages,
    while in the quantum model, they can communicate quantum messages.\footnote{Since classical messages can also be encoded by qubits, no
    additional classical channels are required. }

    \item {\bf Models of Local Computation.} In the field of Byzantine
    protocols, there is a common tendency to overlook the intricacies of local
    computations. We assume players have unbounded computational power and
    local memory.
    \item {\bf Fault models.} We model the faulty behavior of the system by an
    \emph{adversary}. The adversary can corrupt participating players and make them
    deviate from their prescribed programs. Once a player has been corrupted, it remains corrupted permanently. The uncorrupted players are referred to as ``good'' and sometimes the corrupted players are labeled as ``bad''.
     In our
    work, we consider the following types of adversarial behavior:
    \begin{itemize}
        \item {\bf Adaptive.} We will consider \emph{adaptive} adversaries in this paper. An adaptive adversary corrupts players dynamically 
        based on its current information at any time of the protocol.
        \item {\bf Unbound Computation.} Just like good players, the adversary
        has unlimited computational power and memory.
        \item {\bf Private-channel and Full-information.} We will consider
        both private-channel and full-information adversaries. An adversary in the private-channel model is characterized by its lack of adaptation
        based on the specific contents of messages exchanged within a system.
        Essentially, this type of adversary can only discern patterns of
        communication, such as the timing and players involved in message
        exchanges, without access to the actual message contents. By contrast, a
        full-information adversary possesses comprehensive knowledge of all
        local variables associated with the players involved in the system. In
        the context of the quantum model, a full information adversary knows at
        each point the exact pure state of the system.
        \item {\bf Fail-stop and Byzantine.} We will consider both Fail-stop and
        Byzantine adversaries. The players corrupted by the Fail-stop adversary
        will no longer take part in the protocol. We remark that a
        private-channel Fail-stop adversary \emph{cannot} read the local
        memory of corrupted players.\footnote{Some BA protocols consider a
        stronger Fail-stop adversary who can read the memory of corrupted
        players, but our \Cref{thm:main} still applies to those protocols
        because we only require security against a weaker Fail-stop adversary.}
        However, the players corrupted by a Byzantine adversary can deviate arbitrarily
        from the protocol. 
        
    \end{itemize}
\end{itemize}

We are interested in several metrics that measure the performance of BA
protocols:
\begin{itemize}
    \item {\bf Resilience:} the maximum number of parties that can be corrupted
    within the protocol. 
    \item {\bf Round Complexity:} Assume there is a virtual “global clock” within the network that is not accessible to any player. In this context, the term \emph{delay} refers to the time taken from sending a message to its reception. The \emph{number of rounds}\footnote{In the synchronous model, this definition is equivalent to the number of synchronous rounds during the execution.} in an execution refers to the total execution time divided by the longest message delay. The \emph{round complexity} of a protocol $\mathcal{P}$ is defined as the maximum expected number of rounds in $\mathcal{P}$'s executions, considering all inputs and potential adversaries.
    \item {\bf Communication Complexity:} the maximum expected number of messages sent by good players throughout the protocol, considering all inputs and potential adversaries.
\end{itemize}

\subsection{Helper lemmas}

The following two lemmas will be used, of which the proofs are given in
\Cref{sec:proofs}.

\begin{lemma} \label{lem:commute}
    Let $\mathcal{M}$ be the computational basis measurement of a
    Hilbert space $\mathcal{H}$. Then $\mathcal{M}$ commutes with
    \begin{enumerate}
        \item any permutation unitary $U$ acting on $\mathcal{H}$;
        \item any orthogonal projector $\Pi$ on $\mathcal{H}$ in computational
        basis.
    \end{enumerate}
\end{lemma}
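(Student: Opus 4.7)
Recall that, acting on a density matrix, the computational basis measurement is the channel $\mathcal{M}(\rho)=\sum_i \ket{i}\bra{i}\rho\ket{i}\bra{i}$. Saying that $\mathcal{M}$ commutes with a unitary $U$ means $\mathcal{M}(U\rho U^\dagger)=U\mathcal{M}(\rho)U^\dagger$ for every density matrix $\rho$, and similarly for commutation with the projector operation $\rho\mapsto\Pi\rho\Pi$. The plan is to check both identities by manipulating the defining sums in the computational basis; no Hilbert-space-level arguments are required.

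For part~(1), I will use the fact that a permutation unitary $U$ is by definition one satisfying $U\ket{j}=\ket{\sigma(j)}$ for some permutation $\sigma$ of the basis labels. This immediately yields $U^\dagger\ket{i}=\ket{\sigma^{-1}(i)}$ and hence $\bra{i}U=\bra{\sigma^{-1}(i)}$. Expanding the left-hand side gives
\[
\mathcal{M}(U\rho U^\dagger)=\sum_i \ket{i}\bra{\sigma^{-1}(i)}\rho\ket{\sigma^{-1}(i)}\bra{i},
\]
while the right-hand side is $U\mathcal{M}(\rho)U^\dagger=\sum_j \ket{\sigma(j)}\bra{j}\rho\ket{j}\bra{\sigma(j)}$. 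Reindexing via $i=\sigma(j)$ makes the two sums coincide termwise.

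For part~(2), I will write the projector as $\Pi=\sum_{k\in S}\ket{k}\bra{k}$ for some subset $S$ of basis labels. The key observation is that $\Pi\ket{i}\bra{i}=\ket{i}\bra{i}\Pi$ for every $i$, since both sides equal $\ket{i}\bra{i}$ when $i\in S$ and vanish otherwise. Substituting this identity inside
\[
\mathcal{M}(\Pi\rho\Pi)=\sum_i \ket{i}\bra{i}\Pi\rho\Pi\ket{i}\bra{i}
\]
lets me pull $\Pi$ outside the sum on each side, yielding $\Pi\mathcal{M}(\rho)\Pi$.

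The computation is entirely routine; there is no real obstacle, and the only thing to be careful about is keeping the indices $i,\sigma(i),\sigma^{-1}(i)$ straight in part~(1). The lemma is used later only as a bookkeeping device to move measurements past reversible local computations and past ``has this player decided yet?'' projectors, so the elementary form above suffices.
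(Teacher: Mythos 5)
Your proof is correct and follows essentially the same route as the paper: a direct computation in the computational basis, using $U\ket{j}=\ket{\sigma(j)}$ for part~(1) and $\Pi=\sum_{k\in S}\ket{k}\bra{k}$ for part~(2). The only cosmetic differences are that you manipulate the Kraus-sum form of $\mathcal{M}$ directly rather than expanding $\rho=\sum_{i,j}\rho_{i,j}\ket{i}\bra{j}$, and you spell out the verification in part~(2), which the paper leaves to the reader.
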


\begin{lemma} \label{lem:comm} 

    Let $\mathsf{G}$ be good players' registers, $\mathsf{B}$ be bad players'
    registers. Initially  $\mathsf{G}$ and $\mathsf{B}$ are independent and then they
    make quantum communication for several rounds. Assume $\mathsf{G}$ keeps a local copy of the
    communication transcript between $\mathsf{G}$ and $\mathsf{B}$. Then the pure
    state of the system $\mathsf{GB}$ can be written as
    \(
        \sum_{m} \alpha_{m}\ket{m, \phi_{m}}_\mathsf{G} \otimes \ket{\psi_{m}}_\mathsf{B}
    \)
    where $\ket{m}$ are the communication transcripts, $\ket{\phi_{m}}$ are states
    of $\mathsf{G}$ besides the communication transcripts, and $\ket{\psi_{m}}$ are
    states of $\mathsf{B}$.
\end{lemma}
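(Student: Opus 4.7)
My plan is to prove the statement by induction on the total number of local computation and communication steps executed after the initial independent configuration. The base case is immediate: before any step, $\mathsf{G}$ and $\mathsf{B}$ are independent, so the joint pure state factors as $\ket{\phi_\emptyset}_{\mathsf{G}} \otimes \ket{\psi_\emptyset}_{\mathsf{B}}$, which is already of the stated form with the trivial transcript $m=\emptyset$ and $\alpha_\emptyset=1$.

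For the inductive step, I would assume the state after $k$ steps has the form $\sum_{m}\alpha_{m}\ket{m,\phi_{m}}_{\mathsf{G}}\otimes\ket{\psi_{m}}_{\mathsf{B}}$ and analyze four cases. (i) A local unitary $U$ applied by $\mathsf{G}$ that does not touch the transcript register simply updates each branch by $\ket{\phi_m}\mapsto U\ket{\phi_m}$, preserving the form. (ii) A local unitary on $\mathsf{B}$ is symmetric, updating $\ket{\psi_m}$ inside each branch. (iii) When $\mathsf{G}$ sends a message, I expand the outgoing register $M$ in the computational basis within its branch, writing $\ket{\phi_m}=\sum_{x}\beta_{m,x}\ket{\phi_{m,x}'}_{\mathsf{G}'}\ket{x}_{M}$; the ``keeping a copy'' assumption is modeled as a CNOT fan-out from $M$ into a fresh transcript register $T'$ before ownership of $M$ is transferred to $\mathsf{B}$. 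Relabeling the refined transcript as $m'=(m,x)$ and absorbing $\ket{x}_{M}$ into the new $\ket{\psi_{m'}}_{\mathsf{B}}$ collapses the sum back into the required form. (iv) When $\mathsf{B}$ sends a message, the argument is symmetric: expand $\ket{\psi_m}$ in the computational basis of the outgoing register, transfer ownership to $\mathsf{G}$, then apply the CNOT copy into the transcript and absorb the new message register into $\ket{\phi_{m'}}_{\mathsf{G}}$.

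The main conceptual obstacle is deciding what ``$\mathsf{G}$ keeps a local copy of the communication transcript'' means in a setting where the no-cloning theorem forbids copying arbitrary quantum states. I would resolve this by interpreting the copy literally as a computational-basis CNOT fan-out of each message register into a dedicated transcript register at the moment the message crosses the channel. This is the standard reversible implementation and is faithful to the intended meaning: the transcript register becomes a coherent ``which-branch'' marker that records the classical label indexing the superposition. With this interpretation fixed, the entanglement between $\mathsf{G}$ and $\mathsf{B}$ is always mediated through these shared computational-basis labels, and the inductive accounting above carries through routinely, yielding exactly the claimed decomposition.
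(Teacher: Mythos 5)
Your proposal is correct and follows essentially the same route as the paper's proof: induction on the number of local-computation/communication steps, with local unitaries preserving the branch structure and each message transfer handled by expanding the outgoing register in the computational basis and recording the basis label as the transcript copy. The paper makes the same (implicit) CNOT-fan-out interpretation of "keeping a copy," so your explicit treatment of that point is a faithful, slightly more careful rendering of the same argument.
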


\section{Proof of Main Theorem} \label{sec:main}

In this section, we prove our main theorem by giving a general reduction from quantum full-information BA protocols to classical private-channel protocols. 

\maintheorem*

Our reduction requires a ``non-erasing'' property of classical private-channel protocols:

\begin{definition}[Non-erasing BA protocol]  \label{def:nonerase}
In the context of a classical BA protocol denoted as $\mathcal{P}$, each computational step performed by a player can be seen as the evaluation of a function $f(s)$ where $s$ is the internal state of the player. Consider a modified protocol, denoted as $\mathcal{P}'$, which follows the structure of $\mathcal{P}$ except that players in $\mathcal{P}'$ keep a copy of their previous state $s$ in their local memory subsequent to each evaluation of $f(s)$. 

A BA protocol such as $\mathcal{P}$ is called \emph{non-erasing} if the adjusted protocol $\mathcal{P}'$ maintains the characteristics of being a BA protocol while preserving the same level of resilience, round and communication complexity as $\mathcal{P}$.
\end{definition}

To the best of our knowledge, this non-erasing property is considered a reasonable assumption as it is met by all existing protocols 
within
the scope of information-theoretic BA with probability one, e.g., 
\cite{chor1989simple,doi:10.1137/S0097539790187084,10.5555/983102,bangalore2018almost}. 
Beyond our scope, there exist BA protocols requiring the ability to securely erase intermediate secrets, often referred to as the \emph{memory-erasure model} \cite{dryja2020lower}. Those protocols either rely on cryptographic assumptions \cite{chen2016algorand} or succeed only with high probability \cite{king2011breaking}.



The rest of this section is to prove \Cref{thm:main}. For simplicity, we will only give a full proof for the synchronous model (\Cref{sec:syn}) and then briefly discuss how to extend it to the asynchronous case (\Cref{sec:async}). 

\subsection{Synchronous Model} \label{sec:syn}


In this subsection, we prove \Cref{thm:main} for the synchronous model. Without loss of generality, we assume a synchronous classical non-erasing private-channel BA protocol $\PC$ has the following normal form.

\subparagraph*{Classical protocol $\PC$.} 
Let $k$ denote the round number, $m_{k}^{(i,j)}$ denote the message sent from $i$ to $j$ and $m_{k}'^{(i,j)}$ denote a copy of $m_{k}^{(i,j)}$ to be kept by $i$, $b_k^{(i, j)}\in\{0,1\}$
denote the message pattern which is $1$ if $m_{k}^{(i,j)}$ is
non-empty, and $d^{(i)}\in\{0,1,\perp\}$ denote the decided value of $i$
($\perp$ if not decided yet). 
We also use $m_{k}^{(\star, i)}$ to denote the
vector $\left(m_{k}^{(1, i)}, m_{k}^{(2, i)}, \ldots, m_{k}^{(n,
i)}\right)$ and $m_{k}'^{(i, \star)}, m_{k}^{(i, \star)}, b_k^{(i,\star)}$ are defined
similarly. At round $k$, player $i$ on input $x_i$ executes the following steps.
\begin{mdframed}
    {\bf $\PC$ for player $i$ at round $k$}
    \begin{enumerate}
        \item Receive messages $m_{k-1}^{(\star, i)}$ from other players if $k>1$.
        \item Sample randomness $r_k^{(i)}$.
        \item Compute a function \(f_P: \Viewrm_{k}^{(i)} \to \left(m_{k}^{(i,
        \star)}, m_{k}'^{(i, \star)}, b_{k}^{(i, \star)}, d_{k}^{(i)}\right)\) where\footnote{Keeping $\Viewrm_k^{(i)}$ in memory does not lose generality beacuse $\PC$ is non-erasing.}
        \begin{align*}
        \Viewrm_{k}^{(i)}&:= \begin{cases}
            \left(i, x_i, r_1^{(i)}\right) & \text{ if } k=1 \\
            \left(\Viewrm_{k-1}^{(i)},  m_{k-1}'^{(i, \star)}, m_{k-1}^{(\star, i)}, r_{k}^{(i)}\right) & \text{ otherwise }
        \end{cases}.
        \end{align*}
        \item If the decided value $d_{k}^{(i)}\not=\perp$, output value
        $d_{k}^{(i)}$ and terminate.\footnote{ We assume a player decides and
        terminates at the same time, since otherwise we can always defer the
        decision until the player terminates. }
        \item \label{item:syncs4} For $j\in[n]$, send messages $m_{k}^{(i,j)}$ to player $j$ if
        $b_k^{(i,j)}=1$.
    \end{enumerate}
\end{mdframed}

Then we construct a quantum BA protocol $\PQ$ by quantizing $\PC$ as
follows. The essential idea is to purify the local randomness, compute
everything reversibly, and do as little measurement as possible. In this way,
only a superposition of all possible local information is revealed to the
quantum full-information adversary. Formally, 

\subparagraph*{Quantum protocol $\PQ$.} 
Let $k$ denote the round number, $\Msf_k^{(i, j)}, \Msf_k'^{(i, j)}, \Bsf_k^{(i, j)}, \Dsf_{k}^{(i)},
\Rsf_{k}^{(i)}$ denote the quantum registers holding the message from player $i$
to player $j$, the copy of the message, the message pattern, the decided value
of player $i$, and the randomness of player $i$ respectively. At round $k$,
player $i$ on input $x_i$ executes the following steps. 
\begin{mdframed}
    {\bf $\PQ$ for player $i$ at round $k$}
    \begin{enumerate}
    \item Receive quantum messages $\Msf_{k-1}^{(\star, i)}$ from other players if $k>1$.
    \item Prepare a quantum state $\sum_{r}\sqrt{\Pr[r_{k}^{(i)}=r]}\ket{r}$ in a
    new quantum register $\mathsf{R}_k^{(i)}$.
    \item Let $U_P^{(i)}$ denote the unitary $\ket{v}\ket{y}\to \ket{v}\ket{y+f_P(v)}$
    which reversibly computes function $f_P$. Execute $U_P$ on register
    $\View_{k}^{(i)}$ and an empty ancilla register
    $\Asf_k^{(i)}:=\left(\Msf_{k}^{(i, \star)}, \Msf_{k}'^{(i, \star)}, \Bsf_{k}^{(i, \star)}, 
    \Dsf_{k}^{(i)}\right)$ where 
    \[
    \View_k^{(i)} :=
    \begin{cases}
        \ket{i}\bra{i}\otimes \ket{x_i}\bra{x_i}\otimes \mathsf{R}_1^{(i)} & \text{ if } k=1 \\
        \left(\View_{k-1}^{(i)}, \Msf_{k-1}'^{(i, \star)}, \Msf_{k-1}^{(\star, i)},  \Rsf_{k}^{(i)}\right) & \text{ otherwise } 
    \end{cases}.
    \]
    \item Measure register $\Dsf_{k}^{(i)}$. If the result
    $d_{k}^{(i)}\not=\perp$, output $d_{k}^{(i)}$ and terminate.
    \item For each $j\in[n]$, measure $\Bsf_{k}^{(i,j)}$. If the result
    $b_k^{(i,j)}=1$, send the $\Msf_{k}^{(i,j)}$ to player $j$. 
    \end{enumerate}
\end{mdframed}

In the rest of this subsection, for both Fail-stop and Byzantine cases, we prove
that $\PQ$ is a quantum full-information BA protocol with the same resilience,
round and communication complexity as $\PC$. 
The proof follows the 
argument that assuming there is quantum full-information adversary $\AQ$
attacking $\PQ$, we can construct a classical adversary $\AC$ in the private-channel model
attacking $\PC$.

\subsubsection{Fail-stop adversary} \label{sec:failsync}
Without loss of generality, we assume the adversary launches attacks at the beginning of
each round for both $\PC$ and $\PQ$. The Fail-stop adversary has the ability to
adaptively halt some players and choose only a subset of their messages in this
round to be received. Now consider a quantum full-information Fail-stop
adversary $\AQ$ attacking $\PQ$, which can be formalized as follows.

\subparagraph*{Quantum full-information adversary $\AQ$.} Assume $\AQ$ samples
its randomness $r_A$ before the protocol starts. Then at round $k$, $\AQ$ first
chooses the set of corrupted players $S_k$ up to round $k$ such that $|S_k|\leq t$
and $S_{k}\supseteq S_{k-1}$, and then $\AQ$ decides only a subset of
$S_k\setminus S_{k-1}$'s messages to be sent. Here, we model the message
exchanging step as a permutation unitary $V_k$ which swaps the registers
$\Msf_k^{(i, j)}$ and the receiving register of player $j$ for $i,j\in[n]$. 
Then $\AQ$'s attack can be modeled by choosing an appropriate $V_k$.
Thus $\AQ$ can be viewed as a function
\(
f_A: r_A, \View_1, \View_2, \ldots,
\View_{k-1} \to (S_k, V_{k})
\)
where $\View_j$ is the pure state view of the system at round $j$.

Let $b_j:=(b_j^{(1,1)}, \ldots, b_j^{(n,n)})$ and $d_j:=(d_j^{(1)}, \ldots,
d_j^{(n)})$. Observe that the randomness of the system comes only from classical
variables $r_A$, $\{b_j\}$, $\{d_j\}$, so the pure state $\View_k$ is fully
determined by those variables. Thus there exists a function $f_V$ such that \(
f_V\left(r_A, b_1, d_1, b_2, d_2, \ldots, b_k, d_k\right)=\View_k. \) Since the
variables $\{b_j\}$ and $\{d_j\}$ in $\PC$ are also available to classical
private-channel adversaries, now we construct a classical private-channel
adversary $\AC$ attacking $\PC$. 

\subparagraph*{Classical adversary $\AC$ in the private-channel model.} 
First sample the same randomness $r_A$ as $\AQ$ before the protocol starts. 
Then at round $k$, compute its action by the following steps.
\begin{enumerate}
    \item For each $j\in[k-1]$, compute quantum state $\ket{\psi_j}:=f_V(r_A,
    b_1, d_1, b_2, d_2, \ldots, b_j, d_j)$.
    \item Compute action $(S_k, V_k):=f_A\left(r_A, \ket{\psi_1}, \ket{\psi_2},
    \ldots, \ket{\psi_{k-1}}\right)$.
\end{enumerate}

Then we prove that $\AC$ perfectly simulates the execution of $(\PQ, \AQ)$ when
interacting with $\PC$, which is characterized by \Cref{lem:syncfail}.

\begin{definition}
    A $k$-round execution $\mathcal{E}$ of $(\PC, \AC)$ is a sequence \(r_A,
    (b_1, d_1), (b_2, d_2), \ldots, (b_k, d_k)\). $\mathcal{E}$ is also a
    $k$-round execution of $(\PQ, \AQ)$ since the pure states of the system at
    each round can completely determined by $\mathcal{E}$ using $f_V$.
\end{definition}

\begin{lemma} \label{lem:syncfail} 
    
    Any $k$-round execution $\mathcal{E}$ occurs in $(\PQ, \AQ)$ and $(\PC,
    \AC)$ with the same probability. Furthermore, if the pure state after
    $\mathcal{E}$ in $(\PQ, \AQ)$ is $\sum_u\alpha_u\ket{u}$, then the
    distribution of the system's possible states after $\mathcal{E}$ in $(\PC,
    \AC)$ conditioned on $\mathcal{E}$ is $\sum_u|\alpha_u|^2\ket{u}\bra{u}$.\footnote{We use density matrix to represent classical probability distribution. See \Cref{sec:qc} for details.}
\end{lemma}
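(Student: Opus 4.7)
The plan is to prove \Cref{lem:syncfail} by induction on the round number $k$. The base case $k=0$ is immediate: before any round begins, both executions are described by the same data (the inputs $\{x_i\}$ and the adversary's randomness $r_A$), no classical variables have been sampled, and the pure state of $(\PQ,\AQ)$ trivially coincides with the deterministic classical distribution in $(\PC,\AC)$.

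For the inductive step, assume the conclusion holds after round $k-1$. First I would check that the two adversaries choose identical actions at round $k$: by construction $\AC$ computes $\ket{\psi_j}=f_V(r_A,b_1,d_1,\ldots,b_j,d_j)$, and the inductive hypothesis forces $\ket{\psi_j}$ to equal the pure-state view $\View_j$ used by $\AQ$, so both invoke $f_A$ on the same arguments and output the same $(S_k,V_k)$. Consequently the permutation $V_k$ dictating message delivery and the set of halted players are identical in the two executions.

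Next I would analyze one round of local computation. In $\PQ$, a good player prepares the superposition $\sum_r\sqrt{\Pr[r_k^{(i)}=r]}\ket{r}$, applies the reversible unitary $U_P^{(i)}$, and then measures $\Dsf_k^{(i)}$ and $\Bsf_k^{(i,\star)}$ in the computational basis. Starting from the inductive pure state $\sum_u\alpha_u\ket{u}$, appending the fresh randomness register and applying $U_P^{(i)}$ yields a superposition in which each branch $u$ is coherently extended by $\sum_r \sqrt{\Pr[r]}\ket{r,f_P(u,r)}$, so measuring the relevant output registers produces outcome $(d,b)$ with probability $\sum_u|\alpha_u|^2\sum_{r:\,f_P(u,r)\text{ agrees with }(d,b)}\Pr[r]$. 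This matches exactly the probability with which $\PC$ outputs $(d,b)$ starting from the classical distribution $\sum_u|\alpha_u|^2\ket{u}\bra{u}$ guaranteed by induction. Combined with the observation that $V_k$ is a permutation unitary, which commutes with the computational-basis measurements by \Cref{lem:commute}(1), this shows that the joint law of the newly revealed variables $(b_k,d_k)$ matches in the two executions, so the full execution $\mathcal{E}$ occurs with equal probability in $(\PQ,\AQ)$ and $(\PC,\AC)$.

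The delicate step is the second conclusion: conditioned on $\mathcal{E}$, the post-measurement pure state $\sum_u\alpha_u\ket{u}$ in $(\PQ,\AQ)$ must correspond to $\sum_u|\alpha_u|^2\ket{u}\bra{u}$ in $(\PC,\AC)$. The main obstacle is tracking how the projective measurements at round $k$ collapse a high-dimensional superposition that entangles past randomness, current randomness, and message patterns across all players. I would handle this by invoking \Cref{lem:commute}(2): projecting onto the observed $(b_k,d_k)$ is an orthogonal projector in the computational basis, so it commutes with the unitary portions of the round and simply restricts the amplitudes to the branches of $u$ consistent with $\mathcal{E}$. On the classical side, conditioning on the same variables restricts $\PC$'s distribution to exactly the same set of $u$, with relative weights $|\alpha_u|^2$ by the Born rule. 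Since a Fail-stop adversary's power is entirely captured by the classical choice of $(S_k,V_k)$, no further quantum operations threaten the invariant, completing the induction.
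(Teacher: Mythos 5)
Your proposal is correct and follows essentially the same route as the paper's proof: induction on the round number, identifying the two adversaries' actions via $f_V$ and $f_A$, and exploiting that every operation in a round is a permutation of the computational basis (plus a computational-basis projector), so that \Cref{lem:commute} and the Born rule match both the probability of $(b_k,d_k)$ and the conditional post-measurement distribution with the classical side. One small imprecision worth fixing: the projector $\Pi_{b_k,d_k}$ does not commute with $U_P$ (it acts on registers $U_P$ writes); what is needed — and what the paper uses — is that the full computational-basis measurement commutes with the permutation unitaries $V_k, U_P$ and with the projector, which your explicit branch-by-branch Born-rule computation already delivers, so the argument still goes through.
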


\begin{proof}
    See \Cref{sec:proofsyncfail}.
\end{proof}

By the above lemma, we have:
\begin{proposition} \label{thm:failstop}
    In the synchronous Fail-stop model, given a non-erasing classical private-channel BA protocol $\PC$, there exists a quantum full-information BA protocol $\PQ$ with the same resilience, round and communication complexity as $\PC$.
\end{proposition}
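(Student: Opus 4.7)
The plan is to derive the proposition directly from \Cref{lem:syncfail} together with the construction of $\AC$ given just above the lemma. Fix an arbitrary quantum full-information Fail-stop adversary $\AQ$ attacking $\PQ$; I will show that the classical private-channel adversary $\AC$ built from $\AQ$ (sampling the same randomness $r_A$, maintaining the pure-state views $\ket{\psi_j}=f_V(r_A, b_1, d_1, \ldots, b_j, d_j)$, and then evaluating $(S_k, V_k) := f_A(r_A, \ket{\psi_1}, \ldots, \ket{\psi_{k-1}})$) is a legitimate private-channel Fail-stop adversary with the same resilience as $\AQ$. Legitimacy is clear: $\AC$'s decisions depend only on $r_A$, on message patterns $\{b_j\}$, and on decided values $\{d_j\}$, all of which are available in the private-channel model (patterns by definition, and decided values because in any BA protocol, once one good player decides, the others eventually decide the same bit, so $\AC$ can read them off the protocol output). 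The corruption bound $|S_k|\leq t$ is inherited verbatim from $\AQ$, and the unbounded classical computation needed to evaluate $f_V$ and $f_A$ is permitted by our model.

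Next I would invoke \Cref{lem:syncfail} to conclude that every execution transcript $\mathcal{E}=(r_A, (b_1,d_1), \ldots, (b_k,d_k))$ occurs with the same probability in $(\PQ, \AQ)$ as in $(\PC, \AC)$. From this, the three BA properties transfer immediately: agreement, validity, and termination are all predicates on the sequence of decided values $\{d_k^{(i)}\}$, which is a component of $\mathcal{E}$, so since $\PC$ satisfies them against the private-channel adversary $\AC$, $\PQ$ satisfies them against the quantum adversary $\AQ$. The complexity measures coincide for the same reason: the termination round of $\PQ$ is determined by the first $k$ at which every good player's $d_k^{(i)}\neq\perp$, and the number of messages sent at round $k$ equals $\sum_{i\notin S_k, j}b_k^{(i,j)}$; both are functions of $\mathcal{E}$, so their expectations under $(\PQ, \AQ)$ match those under $(\PC, \AC)$. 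Taking suprema over inputs and adversaries yields the claimed preservation of round and communication complexity.

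The main obstacle is not in this reduction but in \Cref{lem:syncfail} itself. The delicate point is that $\PQ$ holds its randomness in superposition and evolves by the reversible unitary $U_P^{(i)}$, yet we must argue that the marginal classical distribution on the measurement outcomes $(b_k, d_k)$ exactly reproduces the classical randomness of $\PC$. I would prove the lemma by induction on $k$, using \Cref{lem:commute} to commute the computational-basis measurements of $\Dsf$ and $\Bsf$ past the permutation unitary $V_k$ representing message delivery (and past the reversible $U_P^{(i)}$ acting on other registers), and \Cref{lem:comm} to keep the global state in a form decomposed along the communication transcript between good and bad registers. Once each per-round conditional distribution of $(b_k, d_k)$ is shown to agree with the one prescribed by $\PC$, chaining them over all rounds yields both the joint probability equality and the conditional-state description in the lemma, and hence the proposition.
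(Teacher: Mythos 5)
Your proposal is correct and follows essentially the same route as the paper's own proof: both invoke \Cref{lem:syncfail} to identify the distribution over executions of $(\PQ,\AQ)$ and $(\PC,\AC)$, and then observe that the BA properties (the paper argues by contradiction on a bad execution), the number of rounds, and the message count $\sum_{k}\sum_{i\in\bar{S}_k,j}b_k^{(i,j)}$ are all determined by the execution, so resilience, round and communication complexity transfer. Your extra remarks on the legitimacy of $\AC$ and the proof sketch of \Cref{lem:syncfail} match what the paper establishes in the adversary construction and in its appendix proof, respectively.
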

\begin{proof}
    Assuming there exists an adversary $\AQ$ that can cause an inconsistent,
    invalid or non-terminating execution $\mathcal{E}$ in $(\PQ, \AQ)$ with probability
    $p>0$ by corrupting $\leq t$ players, then $\mathcal{E}$ also occurs in $(\PC,
    \AC)$ with probability~$p$ by \Cref{lem:syncfail}, which gives a contradiction. Thus the resilience of $\PQ$ is at least the
    resilience of $\PC$.
    
    Given an execution $\mathcal{E}$, let $|\mathcal{E}|$ be the number of
    rounds of $\mathcal{E}$, and
    $\mathrm{CC}(\mathcal{E}):=\sum_{k=1}^{|\mathcal{E}|}
    \sum_{i\in\bar{S}_k,j\in[n]}b_k^{(i,j)}$ denote the number of messages in
    $\mathcal{E}$. Then by \Cref{lem:syncfail},\footnote{For simplicity, we can assume players' input of is chosen by the adversary, so there is no need to take maximum over the input.} 
    \begin{align*}
    \mathrm{RC}(\PQ)&:=
    \max_{\AQ}
    \E_{\text{execution } \mathcal{E}}
    \Pr[\mathcal{E}\in (\PQ, \AQ)]\cdot |\mathcal{E}|\\
    &\ =\max_{\AC\in \mathcal{Q}}
    \E_{\text{execution } \mathcal{E}}
    \Pr[\mathcal{E}\in (\PC, \AC)]\cdot |\mathcal{E}|
    \leq \mathrm{RC}(\PC), \\
    \mathrm{CC}(\PQ)&:=
    \max_{\AQ}
    \E_{\text{execution } \mathcal{\mathcal{E}}}
    \Pr[\mathcal{E}\in (\PQ, \AQ)]\cdot\mathrm{CC}(\mathcal{E})\\
    &\ =\max_{\AC\in \mathcal{Q}}
    \E_{\text{execution } \mathcal{E}}
    \Pr[\mathcal{E}\in (\PC, \AC)]\cdot \mathrm{CC}(\mathcal{E})
    \leq \mathrm{CC}(\PC)
    \end{align*}
    where $\mathrm{RC}(\cdot)$ denotes round complexity, $\mathrm{CC}(\cdot)$ denotes communication complexity, and $\mathcal{Q}$ denotes the set of classical private-channel
    adversaries that are constructed from some quantum full-information adversary
    in the beyond way.
\end{proof}

\subsubsection{Byzantine adversary} \label{sec:byzsync}

For the Byzantine case, we also assume the adversary launches attacks at the beginning
of each round. Unlike the Fail-stop adversary, the Byzantine adversary can manipulate
corrupted players in an arbitrary way. Now consider a quantum
full-information Byzantine adversary $\AQ$ attacking $\PQ$, which can be
formalized as follows. 

\subparagraph*{Quantum full-information adversary $\AQ$.} Assume $\AQ$ samples
its randomness $r_A$ before the protocol starts. Let $S_k$ denote the corrupted
players up to round $k$ such that $|S_k|\leq t, S_{k}\supseteq S_{k-1}$, and
$\bar{S}_k:=[n]\setminus S_k$ denote good players. Here, we model the message-exchanging step differently from the Fail-stop case. When player $j$ receives the
message from $i$, the register $\Msf_{k}^{(i,j)}$ is simply appended to $j$'s
workspace. Then at round $k$, $\AQ$ acts as follows.
\begin{enumerate}
    \item \label{item:byzqs1} First let current corrupted players $S_{k-1}$ receive all the messages
    sent to them. 
    \item \label{item:byzqs2} Apply arbitrary quantum operation on $S_{k-1}$, which can be
    decomposed as a unitary $U_k$ and a measurement operator $\mathcal{M}_k$ on
    the registers of $S_{k-1}$ by Stinespring dilation theorem.\footnote{
    Stinespring dilation theorem \cite{choi1975completely} states that for any
    quantum operation $\mathcal{E}$, there exists a unitary $U$ and an
    environment space $\mathsf{E}$ such that
    $\mathcal{E}(\rho)=\mathrm{Tr}_{\mathsf{E}}\left(U(\rho\otimes
    \ket{0}\bra{0}_{\mathsf{E}})U^{\dagger}\right)$. The partial trace
    $\mathrm{Tr}_{\mathsf{E}}$ is equivalent to measuring $\mathsf{E}$. We can
    assume players start with large enough empty workspace so there is no need
    to append new ancilla space in order to perform $U$. } Let $a_k$ denote the
    measurement outcome.     
    \item \label{item:byzqs3} Choose an enlarged set $S_k$ of corrupted players and corrupt
    $S_k\setminus S_{k-1}$. 
    \item \label{item:byzqs4} Apply arbitrary quantum operation on $S_{k}$, which can be decomposed
    as applying a unitary $U'_k$ and a measurement operator $\mathcal{M}'_k$ on
    the registers of $S_k$. Let $a'_k$ denote the measurement outcome.
\end{enumerate}

We remark that step~\ref{item:byzqs4} is necessary because 
an adaptive adversary can decide to corrupt a player $i$ and stop (or change) the message just sent by $i$ in step~\ref{item:syncs4} of the previous round.

Similar to the Fail-stop case, the adversary's operations $U_k, \Mcal_k, U'_k,
\Mcal'_k$ and the corrupted set $S_k$ are all functions of randomness $r_A$ and
the system's pure states at each step. And the system's pure states can be fully
determined by classical variables $r_A, \{a_j\}, \{a'_j\}, \{b_j\}$ and
$\{d_j\}$.
Thus we can define two functions $g_A$ and $f_A$ such that
\begin{align*}
    g_A&\left(r_A, a_1, a'_1, b_1, d_1, \ldots, a_{k-1}, a'_{k-1}, b_{k-1}, d_{k-1}\right) = (U_k, \Mcal_k), \text{ and } \\
    f_A&\left(r_A, a_1, a'_1, b_1, d_1, \ldots, a_{k-1}, a'_{k-1}, b_{k-1}, d_{k-1}, a_k\right) = (S_k, U'_k, \Mcal'_k).
\end{align*}

Additionally, we define $\Phi\left(r_A, a_1, a'_1, b_1, d_1, \ldots, a_{k-1},
a'_{k-1}, b_{k-1}, d_{k-1}, a_k\right)$ to be the system's pure state right
after step~\ref{item:byzqs3} of $\AQ$ at round $k$. Then by \Cref{lem:comm}, we have
\begin{equation} \label{eq:indep}
    \Phi\left(r_A, a_1, a'_1, b_1, d_1, \ldots, a_{k-1}, a'_{k-1}, b_{k-1}, d_{k-1}, a_k\right)
    =\sum_{m} \alpha_m\ket{m, \phi_{m}}_{\bar{S}_{k}} \ket{\psi_{m}}_{S_{k}}
\end{equation}
where $\ket{m}$ are the copy of messages between $\bar{S}_{k}$ and $S_{k}$ kept
by $\bar{S}_k$,  $\ket{\phi_{m}}$ are states of $\bar{S}_{k}$ besides the copy, and
$\ket{\psi_{m}}$ are states of $S_{k}$.


Since classical variables $\{b_k\}, \{d_k\}$ in $\PC$ are also available to the adversary in the private-channel model, we can construct a classical Byzantine adversary $\AC$ attacking $\PC$ as follows. 

\subparagraph*{Classical adversary $\AC$ in the private-channel model.} First sample the
same randomness $r_A$ as $\AQ$ before the protocol starts. During the protocol,
$\AC$ maintains a communication transcript $T$ between good players
$\bar{S}_{k}$ and bad players $S_{k}$. Also, $\AC$ \emph{classically simulates}
a quantum state of the registers of $S_{k}$, which is denoted by
$\ket{\varphi_k}$ after round $k$. At round $k$, $\AC$ acts as follows.
\begin{enumerate}
\item \label{item:byzcs1} Let $S_{k-1}$ receive all the messages $m_{k-1}^{(\star, S_{k-1})}$ sent to
them and record in $T$.
\item \label{item:byzcs2} Compute $(U_k, \mathcal{M}_k)$ by $g_A$. Then apply $U_k$ and
$\mathcal{M}_k$ on $\ket{\varphi_{k-1}}\otimes \ket{m_{k-1}^{(\star, S_{k-1})}}$
and obtain the measurement outcome $a_k$.
\item \label{item:byzcs3} Compute $(S_k, U'_k, \mathcal{M}'_k)$ by $f_A$. Corrupt players
$S_k$ and update $T$ as the communication transcript between new sets
$\bar{S}_k$ and $S_k$.
Then according to $T$, $\AC$ discards old state $\ket{\varphi_{k-1}}$ and simulates a
new state $\ket{\psi_{T}}$ which is defined in Eq.~\eqref{eq:indep}.
\item \label{item:byzcs4} Apply $U'_k$ and $\mathcal{M}'_k$ to $\ket{\psi_{T}}$ and obtain
measurement outcome $a'_k$. Then apply a computational basis measurement
$\mathcal{M}_{msg}$ on messages to be sent from $S_k$ to $\bar{S}_k$ and add
those messages to $T$. Let $\ket{\varphi_k}$ be the pure state after applying
$U'_k$, $\mathcal{M}'_k$ and $\mathcal{M}_{msg}$.
\end{enumerate}

\begin{definition}
    A $k$-round execution $\mathcal{E}$ of $(\PC, \AC)$ is a sequence $r_A,
    (a_1, a'_1, b_1, d_1)$, $\ldots$, $(a_k, a'_k, b_k, d_k)$. $\mathcal{E}$ is
    also a $k$-round execution of $(\PQ, \AQ)$ since the pure states of the
    system can be determined by $\mathcal{E}$.
\end{definition}

\begin{lemma} \label{lem:syncbyz}
    Any $k$-round execution $\mathcal{E}$ occurs in $(\PQ, \AQ)$ and $(\PC,
    \AC)$ with the same probability. Furthermore, if the pure state in $(\PQ,
    \AQ)$ after $\mathcal{E}$ is $\ket{Q_k}$, then the distribution of system's
    state in $(\PC, \AC)$ after $\mathcal{E}$ is
    $C_k:=\mathcal{M}_{\bar{S}_k}\left(\ket{Q_k}\bra{Q_k}\right)$
    where
    $\mathcal{M}_{\bar{S}_k}$ is the computational basis measurement on good
    players $\bar{S}_k$'s registers.\footnote{Density matrix $C_k$ represents a distribution of system's states with classical $\bar{S}_k$ and quantum $S_k$. It is classically feasible because $S_k$'s quantum state is classically simulated, and the correlation between $\bar{S}_k$ and $S_k$ is classical, i.e., there is no quantum entanglement.}
\end{lemma}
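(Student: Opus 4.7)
The proof proceeds by induction on the round number $k$. For the base case $k=0$, the quantum pure state is simply the good players' inputs tensored with the adversary's ancilla holding $r_A$, and the classical distribution is identical; since no measurements have been performed on good-player registers, $C_0=\Mcal_{\bar{S}_0}(\ket{Q_0}\bra{Q_0})$ trivially. For the inductive step, assume that after round $k-1$ the quantum state $\ket{Q_{k-1}}$ and the classical distribution $C_{k-1}$ satisfy $C_{k-1}=\Mcal_{\bar{S}_{k-1}}(\ket{Q_{k-1}}\bra{Q_{k-1}})$, together with $\Pr[\mathcal{E}_{k-1}\in(\PQ,\AQ)]=\Pr[\mathcal{E}_{k-1}\in(\PC,\AC)]$. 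I would then trace through round $k$ one substep at a time.

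The adversary's first two substeps, receiving messages and applying $(U_k,\Mcal_k)$, act only on the registers of $S_{k-1}$. By the inductive hypothesis and \Cref{lem:comm}, the bad-side marginal of $\ket{Q_{k-1}}$ conditioned on the transcript $T$ equals the state $\ket{\varphi_{k-1}}$ that $\AC$ classically simulates, so the application of $(U_k,\Mcal_k)$ produces the same distribution of the outcome $a_k$ and the same post-measurement bad-side state in both models. The corruption substep is the crux of the argument: because $\bar{S}_k$ keeps verbatim copies of every message exchanged with $S_k$, \Cref{lem:comm} guarantees that the post-corruption pure state decomposes in the form of Eq.~\eqref{eq:indep}, namely $\sum_m\alpha_m\ket{m,\phi_m}_{\bar{S}_k}\ket{\psi_m}_{S_k}$; hence conditioning on the transcript equaling $m$ yields precisely $\ket{\psi_m}$ on the bad side, which matches $\AC$'s freshly simulated $\ket{\psi_T}$. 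Applying $(U'_k,\Mcal'_k)$ then produces the same distribution of $a'_k$. Finally, the extra computational-basis measurement $\Mcal_{msg}$ that $\AC$ applies to messages destined for $\bar{S}_k$ does not alter the distribution, because in $(\PQ,\AQ)$ those registers are immediately handed to good players who store verbatim copies, so tracing them out on the bad side is equivalent to a computational-basis measurement by a second invocation of \Cref{lem:comm}.

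It remains to handle the good players' local computation. Each $i\in\bar{S}_k$ prepares the randomness superposition, applies $U_P^{(i)}$, and measures $\Dsf_k^{(i)}$ and $\Bsf_k^{(i,\star)}$ in the computational basis. Since $U_P^{(i)}$ is a permutation unitary on the computational basis, \Cref{lem:commute} lets us commute the full good-side measurement $\Mcal_{\bar{S}_k}$ backward through $U_P^{(i)}$ to the randomness register $\Rsf_k^{(i)}$; this turns the quantum step into exactly the classical action of sampling $r_k^{(i)}$ with the prescribed distribution and then evaluating $f_P$ deterministically, as $\PC$ prescribes. Consequently the joint distribution of $(b_k,d_k)$ and the resulting good-side classical state matches between $(\PC,\AC)$ and the $\Mcal_{\bar{S}_k}$-measured version of $(\PQ,\AQ)$. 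Chaining all substeps gives both $\Pr[\mathcal{E}\in(\PQ,\AQ)]=\Pr[\mathcal{E}\in(\PC,\AC)]$ and $C_k=\Mcal_{\bar{S}_k}(\ket{Q_k}\bra{Q_k})$, completing the induction. The main obstacle will be the bookkeeping: maintaining the factorization of the global pure state through successive corruption events so that \Cref{lem:comm} can be reapplied each round, and consistently tracking the ancilla registers introduced by $U_P^{(i)}$, $U_k$ and $U'_k$ on both sides of the correspondence.
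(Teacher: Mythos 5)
Your proposal is correct and takes essentially the same route as the paper's proof: induction over rounds, a substep-by-substep comparison in which \Cref{lem:comm} / Eq.~\eqref{eq:indep} handle the corruption step and the harmlessness of the extra measurement $\mathcal{M}_{msg}$, and \Cref{lem:commute} pushes the good-side computational-basis measurement through $U_P$ and the projectors. The only difference is presentational: you phrase each substep as an equality of transcript-conditioned states, while the paper carries out the equivalent operator-commutation calculation on density matrices (including the $\gamma=\gamma'$ and $\beta=\beta'$ normalization bookkeeping you leave implicit).
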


\begin{proof}
    See \Cref{sec:proofsyncbyz}.
\end{proof}

By the above lemma, we conclude \Cref{thm:main} for the synchronous Byzantine case, which can be proven the same way as the Fail-stop case (\Cref{thm:failstop}).
\begin{proposition}\label{thm:Byzan}
    In the synchronous Byzantine model, given a non-erasing classical private-channel BA protocol $\PC$, there exists a quantum full-information BA protocol $\PQ$ with the same resilience, round and communication complexity as $\PC$.
\end{proposition}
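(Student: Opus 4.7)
The plan is to mirror the proof of \Cref{thm:failstop} (the Fail-stop case) but with \Cref{lem:syncbyz} playing the role of \Cref{lem:syncfail}. The key observation is that \Cref{lem:syncbyz} already gives us that any $k$-round execution $\mathcal{E}$ (now a longer tuple $r_A, (a_1, a'_1, b_1, d_1), \ldots, (a_k, a'_k, b_k, d_k)$ that also records the adversary's measurement outcomes) occurs in $(\PQ, \AQ)$ and $(\PC, \AC)$ with the same probability. This probabilistic equivalence is exactly what was used in the Fail-stop argument, so the same structural reasoning should go through once executions are interpreted with the Byzantine definition.

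For \textbf{resilience}, I would argue by contrapositive. Suppose that some quantum full-information Byzantine adversary $\AQ$ corrupting at most $t$ players causes an execution $\mathcal{E}$ that violates agreement, validity, or termination with probability $p>0$ in $(\PQ, \AQ)$. Since the agreement/validity/termination conditions depend only on the classical decision variables $\{d_k^{(i)}\}$ recorded inside $\mathcal{E}$, the same $\mathcal{E}$ is a witness of failure in $(\PC, \AC)$. By \Cref{lem:syncbyz}, $\mathcal{E}$ occurs in $(\PC, \AC)$ with the same probability $p>0$, and $\AC$ corrupts the same set of players as $\AQ$ along $\mathcal{E}$, hence at most $t$ players. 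This contradicts the resilience of $\PC$, so the resilience of $\PQ$ is at least that of $\PC$.

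For \textbf{round complexity} and \textbf{communication complexity}, I would set $|\mathcal{E}|$ to be the number of rounds of $\mathcal{E}$ and $\mathrm{CC}(\mathcal{E}):=\sum_{k=1}^{|\mathcal{E}|}\sum_{i\in\bar{S}_k, j\in[n]} b_k^{(i,j)}$ to be the number of good-player messages in $\mathcal{E}$, both of which are functions of $\mathcal{E}$ alone. Then by \Cref{lem:syncbyz},
\begin{align*}
\mathrm{RC}(\PQ) &= \max_{\AQ} \E_{\mathcal{E}} \Pr[\mathcal{E}\in (\PQ, \AQ)]\cdot |\mathcal{E}|
= \max_{\AC\in \mathcal{Q}} \E_{\mathcal{E}} \Pr[\mathcal{E}\in (\PC, \AC)]\cdot |\mathcal{E}| \leq \mathrm{RC}(\PC), \\
\mathrm{CC}(\PQ) &= \max_{\AQ} \E_{\mathcal{E}} \Pr[\mathcal{E}\in (\PQ, \AQ)]\cdot \mathrm{CC}(\mathcal{E})
= \max_{\AC\in \mathcal{Q}} \E_{\mathcal{E}} \Pr[\mathcal{E}\in (\PC, \AC)]\cdot \mathrm{CC}(\mathcal{E}) \leq \mathrm{CC}(\PC),
\end{align*}
where $\mathcal{Q}$ is the class of classical private-channel Byzantine adversaries arising from the $\AQ\mapsto\AC$ construction preceding \Cref{lem:syncbyz}.

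I do not expect any serious obstacle: all the technical difficulty (in particular, handling the quantum operations $U_k, \Mcal_k, U'_k, \Mcal'_k$ and the adaptive corruption of new players via the transcript $T$ used to resample $\ket{\psi_T}$ on fresh corruption) has already been absorbed into \Cref{lem:syncbyz}. The only point to be slightly careful about is that the Byzantine execution $\mathcal{E}$ now also records the adversary's internal measurement outcomes $\{a_k, a'_k\}$, but since the correctness predicates and the complexity measures are functions purely of the classical public data contained in $\mathcal{E}$, the identical-distribution guarantee of \Cref{lem:syncbyz} immediately yields the three required inequalities.
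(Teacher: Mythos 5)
Your proposal is correct and matches the paper's intent exactly: the paper proves \Cref{thm:Byzan} by simply noting it follows from \Cref{lem:syncbyz} in the same way \Cref{thm:failstop} follows from \Cref{lem:syncfail}, which is precisely the argument you spell out (including the adaptation of executions to record $\{a_k,a_k'\}$ and the identical resilience and expectation computations).
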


\subsection{Asynchronous Model} \label{sec:async}

The techniques used in the proof above can be extended to the asynchronous model as well. However, a key distinction lies in the terminology used to characterize the execution: while the synchronous model employs ``rounds'', the asynchronous model employs ``steps''. In this context, a step involves a single good player receiving only one message, carrying out computations, and potentially transmitting messages. The order in which players receive messages is determined by the adversary. For simplicity, we assume that each player initially receives its input as its first message, and each message contains the sender's ID.

In alignment with the synchronous model, our approach involves first giving a normal form to any asynchronous classical non-erasing private-channel BA protocol $\PC$ and then quantizing it into a quantum
protocol $\PQ$ against the full-information adversary.


\subparagraph*{Classical protocol $\PC$.} Each player is activated each time it
receives a message. Let $\pi_k^{(i)}$ denote the $k$-th message player $i$
receives, where the first message $\pi_1^{(i)}$ is its input $x_i$. The
notations $m_k^{(i,j)}, m_k'^{(i,j)}, b_k^{(i)}, d_k^{(i)}$ are defined
similarly as in synchronous model (\Cref{sec:syn}).
\begin{mdframed}
    {\bf $\PC$ for player $i$ upon receiving the $k$-th message $\pi_k^{(i)}$}
    \begin{enumerate}
        \item Sample randomness $r_k^{(i)}$.
        \item Compute a function \(f_P: \Viewrm_{k}^{(i)} \to \left(m_{k}^{(i,
        \star)}, m_{k}'^{(i, \star)}, b_{k}^{(i, \star)}, d_{k}^{(i)}\right)\) where  
        \begin{align*}
        \Viewrm_{k}^{(i)}&:= \begin{cases}
            \left(i, x_i, r_1^{(i)}\right) & \text{ if } k=1 \\
            \left(\Viewrm_{k-1}^{(i)}, m_{k-1}'^{(i, \star)}, \pi_k^{(i)}, r_{k}^{(i)}\right) & \text{ otherwise }
        \end{cases}.
        \end{align*}
        \item If the decided value $d_{k}^{(i)}\not=\perp$, output value
        $d_{k}^{(i)}$ and terminate.
        \item For $j\in[n]$, send messages $m_{k}^{(i,j)}$ to player $j$ if
        $b_k^{(i,j)}=1$.
    \end{enumerate}
\end{mdframed}

\subparagraph*{Quantum protocol $\PQ$.} Each player is activated each time it
receives a message. Let $\mathsf{\Pi}_k^{(i)}$ denote the $k$-th quantum message
player $i$ receives. The notations $\mathsf{M}_k^{(i,j)}, \mathsf{M}_k'^{(i,j)},
\Bsf_{k}^{(i, j)}, \mathsf{D}_k^{(i)}$ are defined similarly as in synchronous
model (\Cref{sec:syn}). We remark that $\mathsf{\Pi}_k^{(i)}$ is an alias of
register $\Msf_{k'}^{\left(j', i\right)}$ for some $j', k'$.
\begin{mdframed}
    {\bf $\PQ$ for player $i$ upon receiving the $k$-th message $\mathsf{\Pi}_k^{(i)}$}
    \begin{enumerate}
    \item Prepare a quantum state $\sum_{r}\sqrt{\Pr[r_{k}^{(i)}=r]}\ket{r}$ in a
    new quantum register $\mathsf{R}_k^{(i)}$.
    \item Let $U_P^{(i)}$ denote the unitary $\ket{v}\ket{y}\to \ket{v}\ket{y+f_P(v)}$
    which reversibly computes function $f_P$. Execute $U_P^{(i)}$ on register
    $\View_{k}^{(i)}$ and an empty ancilla register
    $\Asf_k^{(i)}:=\left(\Msf_{k}^{(i, \star)}, \Msf_{k}'^{(i, \star)}, \Bsf_{k}'^{(i, \star)},
    \Dsf_{k}^{(i)}\right)$ where 
    \[
    \View_k^{(i)} :=
    \begin{cases}
        \ket{i}\bra{i}\otimes \ket{x_i}\bra{x_i}\otimes \mathsf{R}_1^{(i)} & \text{ if } k=1 \\
        \left(\View_{k-1}^{(i)}, \Msf_{k-1}'^{(i, \star)}, \mathsf{\Pi}^{(i)}_k,  \Rsf_{k}^{(i)}\right) & \text{ otherwise } 
    \end{cases}.
    \]
    \item Measure register $\Dsf_{k}^{(i)}$. If the result
    $d_{k}^{(i)}\not=\perp$, output $d_{k}^{(i)}$ and terminate.
    \item For each $j\in[n]$, measure $\Bsf_{k}^{(i, j)}$. If the result
    $b_k^{(i,j)}=1$, send the $\Msf_{k}^{(i,j)}$ to player $j$. 
    \end{enumerate}
\end{mdframed}

Then we claim that $\PQ$ is a quantum BA protocol against the quantum
full-information adversary in the asynchronous model with the same round and
communication complexity as $\PC$. The proof is almost the same as the
synchronous case, so we only sketch the proof here. 

Assuming there is a quantum full-information Fail-stop (resp.~Byzantine) adversary
$\AQ$ attacking $\PQ$, we can construct a classical private-channel Fail-stop
(resp.~Byzantine) adversary $\AC$ attacking $\PC$ as in \Cref{sec:failsync}
(resp.~\Cref{sec:byzsync}). Then we can define execution execution in the asynchronous model.
\begin{definition}[Informal]
    A $k$-step execution $\mathcal{E}$ is defined to be a sequence 
    $r_A,$ $(a_1, b_1, d_1),$ $(a_2, b_2, d_2), \ldots, (a_k, b_k, d_k)$
    where $r_A$ is the adversary's randomness, $a_j$ is some classical
    information the adversary obtains at step $j$, $b_j\in \{0,1\}^n$ is the message
    pattern, and $d_j\in\{0, 1,\perp\}$ is the decided value of the player
    activated at step $j$.
\end{definition}

Then similar to \Cref{lem:syncfail} (resp.~\Cref{lem:syncbyz}), we prove that any
execution occurs in $(\PQ, \AQ)$ with the same probability.
\begin{lemma}[Informal] \label{lem:async}
    Any $k$-step execution $\mathcal{E}$ occurs in $(\PQ, \AQ)$ and $(\PC, \AC)$
    with the same probability. 
\end{lemma}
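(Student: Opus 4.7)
The plan is to prove \Cref{lem:async} by induction on the step number $k$, closely paralleling the synchronous arguments of \Cref{lem:syncfail} (Fail-stop) and \Cref{lem:syncbyz} (Byzantine). The only structural change is that a ``step'' replaces a ``round'': instead of all $n$ players acting in lockstep, the adversary picks (based on its current view) one good player $i$ to be activated next together with one pending message to deliver. Since this scheduling choice is purely classical data derived from the adversary's view, and since the quantum protocol's full-information adversary $\AQ$ and its classical simulator $\AC$ compute this choice via the same functions on inputs that (by induction) share the same distribution, the scheduling is identical in both worlds. Consequently the induction reduces to analyzing a single local computation executed by the activated player.

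For the base case ($k=0$) nothing has happened beyond sampling $r_A$, and the empty execution occurs with probability $1$ in both $(\PQ,\AQ)$ and $(\PC,\AC)$. For the inductive step, conditioned on a fixed $(k-1)$-step execution, the activated player $i$ in $\PQ$ prepares $\sum_r \sqrt{\Pr[r_k^{(i)}=r]}\,\ket{r}$ in $\mathsf{R}_k^{(i)}$, reversibly applies $U_P^{(i)}$ into the fresh ancilla $\Asf_k^{(i)}$, and then measures $\Dsf_k^{(i)}$ followed by each $\Bsf_k^{(i,j)}$. Because $U_P^{(i)}$ is a permutation unitary writing $f_P$ into a computational-basis register, \Cref{lem:commute} lets me push these measurements back to act directly on the purified input; the outcome distribution for $(d_k,b_k)$ then matches exactly the distribution obtained by classically sampling $r_k^{(i)}$ and evaluating $f_P$, which is precisely what $\PC$ does. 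Any extra classical information $a_k$ extracted by the adversary (in the Byzantine case the measurement outcomes of $U_k,\Mcal_k,U'_k,\Mcal'_k$) is handled exactly as in \Cref{sec:byzsync}: $\AC$ maintains the transcript $T$ and the classically simulated state $\ket{\varphi_k}$ on bad-player registers, so applying the same operations to $\ket{\varphi_{k-1}}$ augmented by any newly delivered message produces the same distribution on $a_k$ as $\AQ$ obtains on the real quantum state, by \Cref{lem:comm}.

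The main obstacle, as in the synchronous Byzantine case, is step~\ref{item:byzqs3}-\ref{item:byzqs4}: when $\AQ$ adaptively corrupts a new player $i^\star$ at step $k$, it gains access to $i^\star$'s full quantum workspace, including quantum messages $i^\star$ had previously sent to good players. In contrast $\AC$ only ever saw classical projections of those messages through the transcript $T$. The resolution is exactly the asynchronous analogue of the synchronous argument: by \Cref{lem:comm} applied to the state immediately after the corruption, the joint state $\mathsf{GB}$ decomposes as $\sum_m \alpha_m\ket{m,\phi_m}_{\bar S_k}\ket{\psi_m}_{S_k}$, and because good players retain copies of all sent and received messages, the classical content recorded in $T$ already determines $m$; hence $\AC$ can discard $\ket{\varphi_{k-1}}$ and freshly reconstruct $\ket{\psi_T}$ from $T$ without distributional loss. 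Combining this refreshed simulation with the inductive hypothesis and the measurement-commutation argument above shows that the probability of the augmented $k$-step execution is identical in $(\PQ,\AQ)$ and $(\PC,\AC)$, completing the induction. The corresponding corollary that $\PQ$ inherits the resilience, round, and communication complexity of $\PC$ then follows verbatim as in \Cref{thm:failstop} and \Cref{thm:Byzan}.
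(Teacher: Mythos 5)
Your proposal is correct and follows essentially the same route as the paper: the paper itself only sketches \Cref{lem:async} by noting the argument is the synchronous one (\Cref{lem:syncfail}, \Cref{lem:syncbyz}) with ``rounds'' replaced by ``steps'', using the same commutation of computational-basis measurements with the permutation unitaries and projectors (\Cref{lem:commute}) and the same transcript-based re-simulation of newly corrupted players' states via \Cref{lem:comm}. Your write-up simply spells out this step-by-step induction in more detail than the paper does, with no substantive deviation.
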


Since the information contained in an execution $\mathcal{E}$ fully determines
the properties, number of rounds, and number of messages of the protocol, we can conclude \Cref{thm:main} in the asynchronous case. This can be proven similarly to the synchronous case (\Cref{thm:failstop} and \Cref{thm:Byzan}).
\begin{proposition}\label{thm;async}
    In the asynchronous Fail-stop (or Byzantine) model, given a non-erasing classical private-channel BA protocol $\PC$, there exists a quantum full-information BA protocol $\PQ$ with the same resilience, round and communication complexity as $\PC$.
\end{proposition}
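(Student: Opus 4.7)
The plan is to mirror the two synchronous proofs (Propositions \ref{thm:failstop} and \ref{thm:Byzan}) almost verbatim, with ``rounds'' replaced by ``steps'' and with the asynchronous notion of execution supplied by the informal \Cref{lem:async}. The high-level structure is: (i) suppose some quantum full-information adversary $\AQ$ attacks $\PQ$ and produces a ``bad'' execution (inconsistent, invalid, or non-terminating) with positive probability; (ii) invoke \Cref{lem:async} to transport that execution, with the same probability, to the pair $(\PC, \AC)$; (iii) derive a contradiction with the correctness and complexity guarantees of $\PC$.

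First I would argue \textbf{resilience}. Fix an adversary $\AQ$ corrupting at most $t$ players in $\PQ$. The construction of $\AC$ in \Cref{sec:failsync} (Fail-stop) or \Cref{sec:byzsync} (Byzantine), adapted in the obvious way so that simulation is triggered per step rather than per round, corrupts exactly the same set $S_k$ at every step, so $\AC$ also respects the $t$ bound. By \Cref{lem:async}, the induced distribution on executions $\mathcal{E}$ coincides for $(\PQ,\AQ)$ and $(\PC,\AC)$. Since the full execution $\mathcal{E}$ records, for each step, the message pattern $b_j$ and the decided value $d_j$ of the activated player, the Agreement, Validity, and Termination predicates are functions of $\mathcal{E}$ alone. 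Consequently any violation of these predicates in $(\PQ,\AQ)$ with probability $p>0$ is reproduced in $(\PC,\AC)$ with probability $p$, contradicting the resilience of $\PC$.

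For \textbf{round and communication complexity}, I would mimic the expectation computation in the proof of \Cref{thm:failstop}. Let $|\mathcal{E}|$ denote the number of steps to completion in $\mathcal{E}$ and let $\mathrm{CC}(\mathcal{E}):=\sum_{k=1}^{|\mathcal{E}|}\sum_{i\in\bar{S}_k,j\in[n]} b_k^{(i,j)}$ count messages sent by good players. Both quantities are determined by $\mathcal{E}$, so \Cref{lem:async} gives
\[
\mathrm{RC}(\PQ)=\max_{\AQ}\E_{\mathcal{E}}\bigl[|\mathcal{E}|\bigr]=\max_{\AC\in\mathcal{Q}}\E_{\mathcal{E}}\bigl[|\mathcal{E}|\bigr]\leq \mathrm{RC}(\PC),
\]
and similarly $\mathrm{CC}(\PQ)\leq\mathrm{CC}(\PC)$, where $\mathcal{Q}$ is the class of classical private-channel adversaries constructed from some full-information quantum adversary as in \Cref{sec:failsync} or \Cref{sec:byzsync}. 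Since in the asynchronous model the round complexity is defined as total time divided by longest message delay, and each step corresponds to one message reception, counting in steps here matches the definition used for $\PC$.

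The main obstacle I anticipate is formalizing the asynchronous execution and the per-step simulation cleanly. In the synchronous case every good player was activated simultaneously in every round, so the adversary's ``schedule'' was trivial and \Cref{lem:syncfail}/\Cref{lem:syncbyz} could be proved by a per-round induction. In the asynchronous case the adversary additionally chooses which good player receives which pending message at each step, and the pure state view $\View_k$ evolves only for that single player. One must therefore extend the functions $f_A, g_A, f_V, \Phi$ to include the scheduling decision as an additional output/input, and verify that the invariant of \Cref{lem:comm} — that good players hold a classical copy of the communication transcript, entangled only through it with bad players — is preserved across each single-player activation. Once this invariant is in place, the inductive step proving \Cref{lem:async} is the same algebraic manipulation (commuting computational basis measurements past $U_P$, $U_k$, $U'_k$, and the messaging permutation) used in \Cref{sec:proofsyncfail} and \Cref{sec:proofsyncbyz}, and the proposition follows exactly as above.
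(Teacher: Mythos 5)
Your proposal takes essentially the same route as the paper: the paper likewise reduces the asynchronous case to the step-indexed execution-equivalence statement (\Cref{lem:async}) and then repeats the synchronous contradiction and expectation arguments of \Cref{thm:failstop} and \Cref{thm:Byzan}, leaving the per-step simulation details (scheduling included in the adversary's functions, the \Cref{lem:comm} invariant) as a sketch, exactly as you outline. One small caution: the number of asynchronous rounds is not literally the number of steps, but since it, like the message count, is a function of the execution, the equidistribution of executions still gives the round- and communication-complexity bounds, which is how the paper itself phrases the conclusion.
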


\begin{proof}[Proof of \Cref{thm:main}]
    \Cref{thm:main} can be obtained by integrating the results from \Cref{thm:failstop}, \Cref{thm:Byzan} and \Cref{thm;async}.
\end{proof}




\section{Discussions}
In this paper, we present a general reduction from quantum full-information BA
protocols to classical private-channel BA protocols that preserves resilience,
round and communication complexity. Utilizing this reduction, we make progress
towards the open question posed by \cite{10.1145/1060590.1060662} of whether
quantum BA can achieve $O(1)$ round complexity and optimal resilience
$t<n/3$ simultaneously in the asynchronous full-information model. We show that
$O(1)$ round complexity and suboptimal resilience $t<n/(3+\epsilon)$ is possible
for any constant $\epsilon>0$. Our reduction also suggests that designing
a better classical private-channel protocol may finally lead to the resolution
of this open question.

There are several interesting directions for future research. Firstly, it would be valuable to explore whether the reverse of our reduction is possible, i.e., whether any
quantum full-information BA protocol can be converted to a classical
private-channel BA protocol without compromising key attributes like resilience. Existing techniques in this paper do not apply
due to the ability of good players to employ quantum operations. Secondly, it is worth considering the potential generalization of our results to less strict models, such as BA that terminates only with high
probability \cite{canetti1993fast, king2011breaking}, or BA that requires
erasing intermediate states \cite{king2011breaking, chen2016algorand}. 
Thirdly, it is worthwhile to explore the potential for developing BA protocols with improved performance by granting quantum players the ability to utilize private memory, thereby shifting the adversary from a position of full-information to one of limited knowledge. This model presents an intriguing opportunity for innovation, especially considering the existence of quantum key distribution in such a framework \cite{bennett2014quantum}.
Finally, while our primary focus is on addressing the BA problem as it stands as a fundamental challenge in this field, we anticipate that our methods can also be applied to other fault-tolerant distributed computing tasks like coin toss and leader election.

\bibliography{citations}

\appendix

\section{Proofs of helper lemmas}
\label{sec:proofs}

\subsection{Proof of \Cref{lem:commute}}

\begin{proof}
    \begin{enumerate}
    \item Since $U$ is a permutation unitary, there exists a permutation $\pi$
    such that $U\ket{i}=\ket{\pi(i)}$ for any computational basis $\ket{i}$ in
    $\mathcal{H}$. Given any density matrix $\rho=\sum_{i,j}
    \rho_{i,j}\ket{i}\bra{j}$ in $\mathcal{H}$, we have
    \begin{align*}
        \mathcal{M}(U\rho U^\dagger)
        &=\mathcal{M}\left(\sum_{i,j}\rho_{i,j}\ket{\pi(i)}\bra{\pi(j)}\right)\\
        &=\sum_{k} \ket{k}\bra{k} \sum_{i,j}\rho_{i,j}\ket{\pi(i)}\bra{\pi(j)} \ket{k}\bra{k}=
        \sum_{k}\rho_{k,k}\ket{\pi(k)}\bra{\pi(k)},\\
        U\mathcal{M}(\rho)U^\dagger 
        &= U\sum_{k}\ket{k}\bra{k} \sum_{i,j}\rho_{i,j}\ket{i}\bra{j} \ket{k}\bra{k}U^\dagger \\
        &= U\sum_{k}\rho_{k,k}\ket{k}\bra{k}U^\dagger
        = \sum_{k}\rho_{k,k}\ket{\pi(k)}\bra{\pi(k)}.
    \end{align*}
    Thus $\mathcal{M}(U\rho U^\dagger)=U\mathcal{M}(\rho)U^\dagger$.
    \item Since $\Pi$ is an orthogonal projector in the computational basis, we have
    $\Pi=\sum_{i\in S} \ket{i}\bra{i}$ for some set $S$. For any state
    $\rho\in\mathcal{H}$, one can verify that
    $\mathcal{M}(\Pi\rho\Pi^\dagger)=\Pi\mathcal{M}(\rho)\Pi^\dagger$.
    \end{enumerate}
\end{proof}

\subsection{Proof of \Cref{lem:comm}}

\begin{proof}
    Proof by induction on the number of messages. Initially, $\mathsf{G}$ and $\mathsf{B}$ are
    independent, so the state of $\mathsf{GB}$ is $\ket{\phi_0}_\mathsf{G}\otimes
    \ket{\psi_0}_\mathsf{B}$. Assuming currently the state of $\mathsf{GB}$ is
    $\sum_{m} \alpha_{m}\ket{m, \phi_{m}}_\mathsf{G} \otimes \ket{\psi_{m}}_{\sf
    B}$, consider the next message. First $\mathsf{G}$ and $\mathsf{B}$ apply a local
    unitary $U_\mathsf{G}\otimes U_\mathsf{B}$ to generate messages. Note that $U_{\sf
    G}$ will not change the previous transcripts $\ket{m}$. Then the state
    becomes
    \[
    \sum_{m} \alpha_{m}U_\mathsf{G}\ket{m, \phi_{m}}_\mathsf{G} \otimes U_\mathsf{B}\ket{\psi_{m}}_\mathsf{B}=
    \sum_{m} \alpha_{m}\ket{m, \phi_{m}'}_\mathsf{G} \otimes \ket{\psi_{m}'}_\mathsf{B}.
    \]

    \begin{itemize}
    \item If the message is sent by $\mathsf{G}$, then the system can be written as
    \[
    \sum_{m} \alpha_{m}\sum_{m'}\beta_{m'}\ket{m, m', m', \phi_{m,m'}'}_\mathsf{G} \otimes \ket{\psi_{m}'}_\mathsf{B}
    \]
    where the second $m'$ is the message to be sent to $\mathsf{B}$ and the first
    $m'$ is a copy to be kept by $\mathsf{G}$. After sending the message, the system becomes
    \[
    \sum_{m,m'} \alpha_{m}\beta_{m'}\ket{m, m', \phi_{m,m'}'}_\mathsf{G} \otimes \ket{m', \psi_{m}'}_\mathsf{B}.
    \]
    \item If the message is sent by $\mathsf{B}$, then the system can be written as
    \[
    \sum_{m} \alpha_{m}\ket{m, \phi_{m}}_\mathsf{G} \otimes \sum_{m'}\beta_{m'}\ket{m', \psi_{m,m'}'}_\mathsf{B}.
    \]
    After sending the message, the system becomes
    \[
    \sum_{m, m'} \alpha_{m}\beta_{m'} \ket{m, m', \phi_{m}}_\mathsf{G} \otimes\ket{\psi_{m,m'}'}_\mathsf{B}.    
    \]
    \end{itemize}
\end{proof}

\section{Proof of \Cref{lem:syncfail}}
\label{sec:proofsyncfail}

\begin{proof}
    Prove by induction on $k$. When $k=0$, the execution $\mathcal{E}_0=r_A$
    occurs in $(\PQ, \AQ)$ and $(\PC, \AC)$ both with the probability of $r_A$.
    Assume the lemma holds for $k-1$. 
    Consider a $k$-round execution \(\mathcal{E}_k := r_A, (b_1, d_1), (b_2,
    d_2), \ldots, (b_{k-1}, d_{k-1}), (b_k, d_k).\) By inductive hypothesis, the
    $(k-1)$-round prefix \(\mathcal{E}_{k-1} := r_A, (b_1, d_1), (b_2, d_2),
    \ldots, (b_{k-1}, d_{k-1})\) occurs with probability~$p$ in both $(\PQ, \AQ)$ and $(\PC,
    \AC)$, and the state before round $k$ is
    $\ket{Q_{k-1}}:=\sum_u\alpha_u\ket{u}$ and
    $C_{k-1}:=\sum_u|\alpha_u|^2\ket{u}\bra{u}$ for $(\PQ, \AQ)$ and $(\PC,
    \AC)$ respectively. 

    \proofsubparagraph*{Round $k$ of $(\PQ, \AQ)$.}
    After $\AQ$'s action and players receiving messages, the state of the system
    becomes \(V_k\ket{Q_{k-1}}\). Then good players first prepare a
    superposition state $\ket{r_k}$ of randomness in a new register $\Rsf_k$ and
    prepare $\ket{0}$ in a new register $\Asf_k$. Note that here
    $\Rsf_k:=(\Rsf_k^{(1)},\ldots, \Rsf_k^{(n)})$,
    $\Asf_k:=(\Asf_k^{(1)},\ldots, \Asf_k^{(n)})$ and all other notations
    without superscript are defined similarly.
    
    Then the players apply the
    unitary operator $U_P:=\otimes_{i=1}^n U_P^{(i)}$ followed by a measurement which outputs $(b_k, d_k)$.
    The measurement can be viewed as an orthogonal projector $\Pi_{b_k, d_k}$ in
    computational basis that projects the quantum state of registers $(\Bsf_k,
    \Dsf_k)$ into values $(b_k, d_k)$. Then the state after round $k$ becomes
    \[
        \ket{Q_k}:=\frac{1}{\sqrt{\beta}}\Pi_{b_k, d_k}U_P\left(V_k\ket{Q_{k-1}}\otimes \ket{0}_\Asf\otimes \ket{r_k}_\Rsf\right)
    \] 
    where $\beta$ is the probability of getting measurement outcome $(b_k, d_k)$.

    
    \proofsubparagraph*{Round $k$ of $(\PC, \AC)$.} The first
    observation is that $C_k$ can be viewed as first applying the same operation
    as $(\PQ, \AQ)$ and then applying computational basis measurement
    $\mathcal{M}$ on the whole system:
    \[
    C_k:= \mathcal{M}
    \left(
    \frac{1}{\beta' }  \Pi_{b_k, d_k} U_P V_k
    \left(C_{k-1}\otimes \ket{0,r_k}\bra{0,r_k}_{\Asf\Rsf}
    \right)
    V_k^{\dagger} U_P^\dagger \Pi_{b_k, d_k}^\dagger
    \right)
    \]
    where $\beta'$ is the probability of getting measurement outcome $(b_k,
    d_k)$. The second observation is that $C_{k-1}=\mathcal{M}'(\ket{Q_{k-1}}\bra{Q_{k-1}})$
    where $\mathcal{M}'$ denotes the computational basis measurement in $\ket{Q_{k-1}}$'s space. Then
    \begin{align*}
        C_k
        &=\mathcal{M}\left(
        \frac{1}{\beta' }  \Pi_{b_k, d_k} U_P V_k
        \left(\mathcal{M}'\left(\ket{Q_{k-1}}\bra{Q_{k-1}}\right)\otimes\ket{0,r_k}\bra{0,r_k}_{\Asf\Rsf} \right)
        V_k^{\dagger} U_P^\dagger \Pi_{b_k, d_k}^\dagger
        \right) \\
        &=
        \frac{1}{\beta' }  \Pi_{b_k, d_k} U_P V_k
        \mathcal{M}\left(\mathcal{M}'\left(\ket{Q_{k-1}}\bra{Q_{k-1}}\right)\otimes\ket{0,r_k}\bra{0,r_k}_{\Asf\Rsf} \right)
        V_k^{\dagger} U_P^\dagger \Pi_{b_k, d_k}^\dagger.
    \end{align*}
    The second equality is because $U_P, V_k$ are all permutation unitaries and
    $\Pi_{b_k, d_k}$ is an orthogonal projector in computational basis, which
    all commute with $\mathcal{M}$ by \Cref{lem:commute}. Since $\mathcal{M}$
    measures a larger space than $\mathcal{M}'$, $\mathcal{M}'$ can be absorbed
    into $\mathcal{M}$, i.e., $\mathcal{M}\mathcal{M}'\equiv \mathcal{M}$. Thus
    \begin{align*}
        C_k
        &=
        \frac{1}{\beta' }  \Pi_{b_k, d_k} U_P V_k
        \mathcal{M}\left(\ket{Q_{k-1}}\bra{Q_{k-1}}\otimes\ket{0, r_k}\bra{0, r_k}_{\Asf\Rsf} \right)
        V_k^{\dagger} U_P^\dagger \Pi_{b_k, d_k}^\dagger \\
        &=
        \mathcal{M}\left(\frac{1}{\beta' }  \Pi_{b_k, d_k} U_P V_k
        \left(\ket{Q_{k-1}}\bra{Q_{k-1}}\otimes\ket{0, r_k}\bra{0, r_k}_{\Asf\Rsf} \right)
        V_k^{\dagger} U_P^\dagger \Pi_{b_k, d_k}^\dagger \right)\\
        &=\frac{\beta}{\beta'}\mathcal{M} \left(\ket{Q_k}\bra{Q_k}\right).
    \end{align*}
    Finally, we have $\beta=\beta'$ because $C_k$ has trace $1$. Thus
    $C_k=\mathcal{M}\left(\ket{Q_k}\bra{Q_k}\right)$ and the probability of the
    $\mathcal{E}_k$ occurring is $p\beta$ for both $(\PQ, \AQ)$ and $(\PC,
    \AC)$.
\end{proof}

\section{Proof of \Cref{lem:syncbyz}} \label{sec:proofsyncbyz}
\begin{proof}
    Prove by induction on $k$. The base case $k=0$ is trivial. Assume the
    proposition holds for $k-1$. Consider a $k$-round execution \(\mathcal{E}_k
    := r_A, (a_1, a'_1, b_1, d_1), \ldots, (a_k, a'_k, b_k, d_k).\) By inductive
    hypothesis, the $(k-1)$-round prefix \(\mathcal{E}_{k-1} := r_A, (a_1, a'_1,
    b_1, d_1), \ldots, (a_{k-1}, a'_{k-1}, b_{k-1}, d_{k-1})\) occurs with probability~$p$
    in both $(\PQ, \AQ)$ and $(\PC, \AC)$, and the state before round $k$ is
    $\ket{Q_{k-1}}$ and
    $C_{k-1}:=\mathcal{M}_{S_{k-1}}\left(\ket{Q_{k-1}}\bra{Q_{k-1}}\right)$ for
    $(\PQ, \AQ)$ and $(\PC, \AC)$ respectively. Since good players' messages are
    fully determined by their local variables, $\mathcal{M}_{S_{k-1}}$ can be
    restricted to a measurement $\mathcal{M}_{S_{k-1}}'$ which does not measure
    the messages $\bar{S}_{k-1}$ are about to send out, i.e.,
    $\Msf_k^{(\bar{S}_{k-1}, \star)}$. Then
    $C_{k-1}:=\mathcal{M}_{S_{k-1}}'\left(\ket{Q_{k-1}}\bra{Q_{k-1}}\right)$. In
    the following, we consider the evolution of $\ket{Q_{k-1}}$ and $C_{k-1}$ in round $k$.

    \proofsubparagraph*{Step~\ref{item:byzqs1} and \ref{item:byzqs2} of the adversary.} Both $\AQ$ and $\AC$
    apply $U_k$ and $\mathcal{M}_k$ on $S_{k-1}$ along with the messages
    $\Msf_{k-1}^{(\star, S_{k-1})}$ sent to them. Since we know $\mathcal{M}_k$ will
    output $a_k$, it can be viewed as an orthogonal projector $\Pi_{a_k}$ that
    projects into the space of $a_k$. Since $\Pi_{a_k}$ and $U_k$ act only on
    $S_{k-1}$'s registers and the messages $\bar{S}_{k-1}$ will send to
    $S_{k-1}$, they commute with $\mathcal{M}'_{\bar{S}_{k-1}}$
    . Thus the states of $(\PQ, \AQ)$ and $(\PC, \AC)$ become
    \begin{align*}
    \ket{Q_{k-0.5}} &:= \frac{1}{\sqrt{\gamma}}\Pi_{a_k}U_k\ket{Q_{k-1}} \text{, and } \\
    C_{k-0.5} &:= \frac{1}{\gamma'}\Pi_{a_k}U_k C_{k-1} U_k^\dagger \Pi_{a_k}^\dagger
    = \frac{1}{\gamma'}\Pi_{a_k}U_k \mathcal{M}_{\bar{S}_{k-1}}'\left(\ket{Q_{k-1}}\bra{Q_{k-1}}\right) U_k^\dagger \Pi_{a_k}^\dagger \\
    &\ = \mathcal{M}_{\bar{S}_{k-1}}'\left(\frac{1}{\gamma'}\Pi_{a_k}U_k \ket{Q_{k-1}}\bra{Q_{k-1}} U_k^\dagger \Pi_{a_k}^\dagger\right)
    = \frac{\gamma}{\gamma'}\mathcal{M}_{\bar{S}_{k-1}}'(\ket{Q_{k-0.5}}\bra{Q_{k-0.5}}).
    \end{align*}
    where $\gamma$ and $\gamma'$ are probabilites of $\ket{Q_{k-1}}$ and $C_{k-1}$
    outputting $a_k$. Since $C_{k-0.5}$ has trace $1$, we have $\gamma=\gamma'$
    and thus $C_{k-0.5}=\mathcal{M}'_{\bar{S}_{k-1}}\left(\ket{Q_{k-0.5}}\bra{Q_{k-0.5}}\right)$.

    \proofsubparagraph*{Step~\ref{item:byzqs3} of the adversary.} Both $\AQ$ and $\AC$ choose an
    enlarged set $S_k$ of corrupted players. This step does not affect the state
    $\ket{Q_{k-0.5}}$ of $(\PQ, \AQ)$. By \Cref{lem:comm}, we have
    $\ket{Q_{k-0.5}}=\sum_{m}\alpha_m\ket{m,
    \phi_{m}}_{\bar{S}_{k}}\ket{\psi_{m}}_{S_{k}}$. Since
    $\Mcal'_{\bar{S}_{k-1}}$ can be decomposed as $\Mcal'_{\bar{S}_{k-1}}\equiv
    \Mcal_1\otimes \Mcal_2 \otimes \Mcal_3$, where $\Mcal_1$ acts on transcript
    $\ket{m}$, $\Mcal_2$ acts on registers of $\bar{S}_{k-1}$ besides $\ket{m}$,
    and $\Mcal_3$ acts on registers newly corrupted players $S_k\setminus
    S_{k-1}$, we have
    \[
    C_{k-0.5}=\mathcal{M}_{\bar{S}_{k-1}}'\left(\ket{Q_{k-0.5}}\bra{Q_{k-0.5}}\right)
    =\sum_m |\alpha_m|^2  \ket{m}\bra{m} \otimes \mathcal{M}_2\left(\ket{\phi_{m}}\bra{\phi_{m}}\right)
    \otimes \mathcal{M}_3\left(\ket{\psi_m}\bra{\psi_m}\right).
    \]
    In step~\ref{item:byzcs3} of $\AC$, $\AC$ has recorded the transcript $m$ and will
    discard the old state $\mathcal{M}_3\left(\ket{\psi_m}\bra{\psi_m}\right)$ and simulate a
    new state $\ket{\psi_{m}}$. After that, the state of $(\PC, \AC)$ becomes
    \[
    C'_{k-0.5}:=
    \sum_m |\alpha_m|^2  \ket{m}\bra{m} \otimes \mathcal{M}_2\left(\ket{\phi_{m}}\bra{\phi_{m}}\right)
    \otimes\ket{\psi_m}\bra{\psi_m}
    = \mathcal{M}_{\bar{S}_{k}}' \left(\ket{Q_{k-0.5}}\bra{Q_{k-0.5}}\right).
    \]
    Note that we use $\mathcal{M}_{\bar{S}_{k}}'$ to distinguish from operator
    $\mathcal{M}_{\bar{S}_{k}}$ which also measures the newly appended registers $\Asf_k,
    \Rsf_k$, and $\Msf_{k-1}^{(S_k, \bar{S}_k)}$ of $\bar{S}_k$ at round
    $k$.

    \proofsubparagraph*{Step~\ref{item:byzqs4} of the adversary and good players' action.} Step~\ref{item:byzqs4} of $\AQ$ applies $U'_k$ followed by measurement $\Mcal'_k$ which outputs
    $a'_k$ on $S_k$'s registers. The measurement can be viewed as a projector
    $\Pi_{a'_k}$ that projects into the space of $a'_k$. Then good players apply
    unitary $U_P$ and projector $\Pi_{b_k, d_k}$ which projects the state of
    registers $(\Bsf_k, \Dsf_k)$ into values $(b_k, d_k)$. Thus the state of
    $(\PQ, \AQ)$ after round $k$ becomes
    \[
        \ket{Q_k}:=\frac{1}{\sqrt{\beta}}\Pi_{b_k, d_k}U_P\left(\Pi_{a_k'}U'_k\ket{Q_{k-0.5}}\otimes \ket{0}_\Asf\otimes \ket{r_k}_\Rsf\right).
    \]
    where $\beta$ is the probability of outputting $a'_k, b_k$, and $d_k$.

    Step~\ref{item:byzcs4} of $\AC$ will additionally apply a measurement $\mathcal{M}_{msg}$ on
    messages $\Msf_{k-1}^{(S_k, \bar{S}_k)}$ sent from $S_k$ to $\bar{S}_k$. The good players' action of $\PC$
    can be viewed as applying the same operation as $\PQ$ and then measuring good
    players $\bar{S}_k$'s space in computational basis. Thus the state of $(\PC,
    \AC)$ becomes
    \begin{align*}
        C_k &:=\mathcal{M}_{\bar{S}_k}\left(\frac{1}{\beta'}\Pi_{b_k, d_k}U_P\left(\mathcal{M}_{msg}(\Pi_{a_k'}U'_kC'_{k-0.5}U_k'^\dagger\Pi_{a_k'}^\dagger)\otimes \ket{0, r_k}\bra{0, r_k}_{\Asf\Rsf}\right)U_P^\dagger\Pi_{b_k, d_k}^\dagger\right)\\
        &\ =\mathcal{M}_{\bar{S}_k}\left(\frac{1}{\beta'}\Pi_{b_k, d_k}U_P\left(\mathcal{M}_{msg}(\Pi_{a_k'}U'_k\mathcal{M}_{\bar{S}_{k}}' \left(\ket{Q_{k-0.5}}\bra{Q_{k-0.5}}\right)U_k'^\dagger\Pi_{a_k'}^\dagger)\otimes \ket{0, r_k}\bra{0, r_k}_{\Asf\Rsf}\right)U_P^\dagger\Pi_{b_k, d_k}^\dagger\right)
    \end{align*}
    where $\beta'$ is the probability of outputting $a'_k, b_k$, and $d_k$. Similar
    to Fail-stop case, $\mathcal{M}_{\bar{S}_k}$ and $U_P\Pi_{b_k, d_k}$ commute
    by \Cref{lem:commute}. $\mathcal{M}'_{\bar{S}_k}$ and
    $\Pi_{a_k'}U'_k$ also commute because they act on $\bar{S}_k$ and $S_k$
    separately. Thus
    \begin{align*}
        C_k&=\frac{1}{\beta'}\Pi_{b_k, d_k}U_P\mathcal{M}_{\bar{S}_k}\mathcal{M}_{msg}\mathcal{M}_{\bar{S}_{k}}'\left(\Pi_{a_k'}U'_k \ket{Q_{k-0.5}}\bra{Q_{k-0.5}}U_k'^\dagger\Pi_{a_k'}^\dagger\otimes \ket{0, r_k}\bra{0, r_k}_{\Asf\Rsf}\right)U_P^\dagger\Pi_{b_k, d_k}^\dagger \\
        &=\frac{1}{\beta'}\Pi_{b_k, d_k}U_P\left(\mathcal{M}_{\bar{S}_k}(\Pi_{a_k'}U'_k \ket{Q_{k-0.5}}\bra{Q_{k-0.5}}U_k'^\dagger\Pi_{a_k'}^\dagger)\otimes \ket{0, r_k}\bra{0, r_k}_{\Asf\Rsf}\right)U_P^\dagger\Pi_{b_k, d_k}^\dagger \\
        &=\mathcal{M}_{\bar{S}_k}\left(\frac{1}{\beta'}\Pi_{b_k, d_k}U_P\left(\Pi_{a_k'}U'_k \ket{Q_{k-0.5}}\bra{Q_{k-0.5}}U_k'^\dagger\Pi_{a_k'}^\dagger\otimes \ket{0, r_k}\bra{0, r_k}_{\Asf\Rsf}\right)U_P^\dagger\Pi_{b_k, d_k}^\dagger\right) \\
        &=\frac{\beta}{\beta'}\mathcal{M}_{\bar{S}_k} \left(\ket{Q_k}\bra{Q_k}\right).
    \end{align*}
    where the second equality is because $\mathcal{M}_{\bar{S}_k}$ measures a
    larger space than $\mathcal{M}_{msg}\mathcal{M}'_{\bar{S}_k}$, thus
    $\mathcal{M}_{\bar{S}_k}\mathcal{M}_{msg}\mathcal{M}'_{\bar{S}_k}\equiv \mathcal{M}_{\bar{S}_k}$.

    Finally, since $C_k$ has trace $1$, we have $\beta=\beta'$ and thus
    $C_k=\mathcal{M}_{\bar{S}_k}\left(\ket{Q_k}\bra{Q_k}\right)$. The probability of
    $\mathcal{E}_k$ occurring is $p\gamma\beta$ for both cases.
\end{proof}

\section{Bit-efficient Asynchronous Protocols}
\label{sec:bit-efficient}
We give two explicit constructions of asynchronous quantum BA protocol against the
full-information adversary. Both of them are quantized from some classical
private-channel protocol. Stronger than the general reduction in the main
text, these two constructions not only preserve the round and communication
complexity, but also is bit-efficient. In other words, the local computational
complexity and number of communication bits is also the same as the original classical
protocol.

\subsection{Fail-stop case}
We quantize the classical protocol in Section 14.3 of
\cite{10.5555/983102}. The original classical protocol is a phase-based
voting scheme: Each phase consists of two components, voting for their
preferences and tossing a common coin. Since voting is fully deterministic, we
only need to quantize the common-coin protocol. In the following, we present a
quantum common-coin protocol (\Cref{algo:coin}) and then prove that it achieves
properties of common coin against quantum full-information adversary that can
corrupt up to $t<n/2$ players (\Cref{lem:coin}).

\begin{algorithm}[!htb]
    \caption{Quantum common-coin protocol: code for player $i$.} \label{algo:coin}
    \begin{algorithmic}[1]
        \Procedure{\textsc{Quantum-common-coin}}{}
        \State $\ket{c}\gets \sqrt{\frac{1}{n}}\ket{0}+\sqrt{1-\frac{1}{n}}\ket{1}$ \label{line:coin}
        \State $\mathsf{coins}\gets \textsc{Quantum-get-core}\left(\ket{c}\right)$
        \State apply gate $C^n X$ on $\mathsf{coins}\otimes \ket{0}\bra{0}_\Asf$ where
        \[
        C^n X\ket{x_1, x_2, \ldots, x_n, y}:= \begin{cases}
            \ket{x_1, x_2, \ldots, x_n, y\oplus 1} & \text{ if } x_1=x_2=\cdots=x_n=1 \\
            \ket{x_1, x_2, \ldots, x_n, y} & \text{ else }
        \end{cases}          
        \]
        \State \label{line:measure} measure register $\Asf$ and output the measurement result as coin value
        \EndProcedure
        \item[]
        \Procedure{\textsc{Quantum-get-core}}{$\mathsf{coin}$}
            \State $S_1,S_2,S_3\gets \emptyset$; $\mathsf{coins}[j]\gets \perp$ for $j\in[n]$
            \State $\mathsf{coins}[i]\gets \mathsf{coin}$
            \State $\textsc{Quantum-multicast}(\langle \textrm{first}, \mathsf{coins}[i]\rangle)$
            \Upon{$\langle \textrm{first}, \mathsf{v}\rangle$ is received from player $j$}
            \State $\mathsf{coins}[j]\gets \mathsf{v}$
            \State $S_1\gets S_1\cup \{j\}$
            \If {$|S_1|=n-t$}
                {$\textsc{Quantum-multicast}(\langle \textrm{second}, \mathsf{coins}\rangle)$}
            \EndIf
            \EndUpon
            \Upon{$\langle \textrm{second}, \mathsf{values}\rangle$ is received from player $j$}
            \State $S_2\gets S_2\cup \{j\}$
            \State $\mathsf{coins}[j]\gets \mathsf{values}[j]$ if $\mathsf{coins}[j]=\perp$ for $j\in[n]$
            \If {$|S_2|=n-t$}
            {$\textsc{Quantum-multicast}(\langle \textrm{third}, \mathsf{coins}\rangle)$}
            \EndIf
            \EndUpon
            \Upon{$\langle \textrm{third}, \mathsf{values}'\rangle$ is received from player $j$}
            \State $S_3\gets S_3\cup \{j\}$
            \State $\mathsf{coins}[j]\gets \mathsf{values}'[j]$ if $\mathsf{coins}[j]=\perp$ for $j\in[n]$
            \If {$|S_3|=n-t$}
            \State $\mathsf{coins}[j]\gets \ket{1}$ if $\mathsf{coins}[j]=\perp$ for $1\leq j\leq n$
            \State {\bf return} $\mathsf{coins}$
            \EndIf
            \EndUpon
        \EndProcedure
        \item[]
        \Procedure{\textsc{Quantum-multicast}}{$\langle\textrm{message}, \mathsf{qubits}\rangle$}
            \For{$j \in [n]$}
            \State initialize an empty quantum register $\mathsf{qubits}'$
            \State apply $CX$ gate on $\mathsf{qubit}\otimes \mathsf{qubits}'$ where
            \(
             CX \ket{a}\ket{b} = \ket{a}\ket{a+b}
            \)
            \State send $\langle\textrm{message}, \mathsf{qubits}'\rangle$ to player $j$
            \EndFor
        \EndProcedure
    \end{algorithmic}    
\end{algorithm}

\begin{lemma} \label{lem:core}
    In \textsc{Quantum-get-core}, there exists a set of players $C$,
such that $|C|\geq n-t$ and the coin of every player in $C$ is received by all
good players.
\end{lemma}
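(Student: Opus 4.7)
The plan is to exhibit a common-core set $C$ of size at least $n-t$ via a witness-based propagation argument across the three phases of \textsc{Quantum-get-core}. First I would verify that every good player terminates: with $t<n/2$ there are $|G|\geq n-t>t$ good players, each of which faithfully multicasts its ``first'', ``second'', and ``third'' message in turn, so every good player's $S_1,S_2,S_3$ sets eventually reach size $n-t$ and the procedure returns. For each good $i\in G$, let $V_i$ denote the set of indices $j$ with $\mathsf{coins}[j]\neq\perp$ at the moment $i$ returns (before the final fill-with-$\ket{1}$ step). Each $V_i$ trivially has $|V_i|\geq n-t$, since the $n-t$ distinct senders in $S_3^{(i)}$ each deposit their own index into $V_i$.

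The core task is to construct a single $C\subseteq\bigcap_{i\in G}V_i$ with $|C|\geq n-t$. My plan is to select a witness good player $i^*$---concretely, the first good player (in real time) to multicast ``third''---and take $C$ to be $i^*$'s view just before that multicast. By the $|S_2^{(i^*)}|=n-t$ condition together with the first-phase updates propagated by the ``second'' messages $i^*$ has received, $|C|\geq n-t$ follows immediately. Establishing $C\subseteq V_{i'}$ for every good $i'\in G$ then amounts to a propagation argument: any such $i'$ completes only after receiving $n-t$ ``third'' messages, so $V_{i'}$ absorbs the views of each sender in $S_3^{(i')}$; by quorum intersection (for $t<n/2$, any two sets of size $n-t$ overlap in at least $n-2t$ players), at least one good player in $S_3^{(i')}$ must have received a ``second'' message carrying $i^*$'s view, thereby relaying $C$ forward into $V_{i'}$.

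The main obstacle is the asynchronous scheduling: we must ensure that the propagation chain $V_{i^*}\to V_k\to V_{i'}$ actually transfers \emph{all} of $C$, not merely a subset, before $i'$ terminates. The choice of $i^*$ as the \emph{first} good player to multicast ``third'' is crucial here, since it guarantees every other good $i'$ is still collecting its ``second'' messages at the moment $i^*$'s view crystallizes, so that intermediate good players have time to relay $C$ via their ``second'' and ``third'' messages. Making this rigorous requires a careful case analysis of the real-time ordering of good players' phase transitions, combined with repeated use of the quorum-intersection property afforded by $t<n/2$; this is the delicate step on which the $n-t$ bound hinges (rather than the weaker $n-2t$ that a naive single application of quorum intersection would yield).
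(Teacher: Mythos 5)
Your central relay step does not hold up. You take $C$ to be $i^*$'s view just before $i^*$ multicasts ``third'', and then claim that for every good $i'$ ``at least one good player in $S_3^{(i')}$ must have received a second message carrying $i^*$'s view''. Two things go wrong here. First, no second message ever carries that view: $i^*$'s own second message was sent earlier (when $|S_1^{(i^*)}|=n-t$) and contains $i^*$'s \emph{earlier} view, while other players' second messages carry \emph{their own} views; the only message containing $i^*$'s third-phase view is $i^*$'s third message, and third messages are never relayed further (there is no fourth phase), so the propagation chain $V_{i^*}\to V_k\to V_{i'}$ you describe simply does not exist in the protocol. Second, even if you repair this by letting $C$ be $i^*$'s view at second-multicast time (which \emph{is} carried by $i^*$'s second message), quorum intersection does not give what you need: two sets of size $n-t$ are guaranteed to overlap somewhere, but this never forces the \emph{specific} player $i^*$ to lie in any other player's second quorum $S_2^{(w)}$. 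The scheduler may delay $i^*$'s second message to everyone, so no $w\in S_3^{(i')}$ need ever have heard from $i^*$, and choosing $i^*$ as the first good player to multicast ``third'' does not change this. So the difficulty is not, as you suggest, making sure ``all of $C$, not merely a subset'' is transferred; it is that the single-witness propagation argument has no valid first link, and a naive application of quorum intersection only yields an overlap of size $n-2t$ between views of \emph{pairs} of players, not a common core of size $n-t$ shared by all.

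The paper does not attempt such an argument at all: it observes that \textsc{Quantum-get-core} has exactly the same communication pattern as the classical \textsc{Get-core} procedure (only the payloads are quantum registers, copied via $CX$ in \textsc{Quantum-multicast}), and since \Cref{lem:core} is a statement purely about which players' coins reach whom, the classical Lemma~14.5 of \cite{10.5555/983102} applies verbatim. That classical proof is a global combinatorial argument over the three phases of quorums, not a propagation argument from a single distinguished witness; if you want a self-contained proof you would need to reproduce that style of argument (or cite it, as the paper does), rather than patch the witness-based sketch above.
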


\begin{proof}
    The proof is the same as Lemma 14.5 in \cite{10.5555/983102} 
    for the classical \textsc{Get-core} procedure.
\end{proof}

\begin{lemma} \label{lem:coin}
    Against a quantum full-information Fail-stop adversary that can corrupt up
    to $t<n/2$ players, \textsc{Quantum-common-coin} satisfies the following properties.
    \begin{enumerate}
    \item All good players terminate in $O(1)$ rounds.
    \item The probability of all good players outputting $1$ is $\Omega(1)$.
    \item The probability of all good players outputting $0$ is $\Omega(1)$.
    \end{enumerate}
\end{lemma}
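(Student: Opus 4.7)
The plan is to prove the three properties of \textsc{Quantum-common-coin} by exploiting one structural fact: every operation executed before line~\ref{line:measure} is a unitary --- the state preparations, the CX gates inside \textsc{Quantum-multicast}, and the final $C^n X$ gate. So the global system evolves as a single pure state $\ket{\Psi}$ whose form is a deterministic function of the adversary's strategy and its private randomness $r_A$, and no classical measurement outcome is ever exposed until good players simultaneously measure their ancilla $\Asf$ at the final step. A quantum full-information Fail-stop adversary observes exactly this pure state, but since it cannot apply any quantum operations (its only moves are halting players and selecting which messages are delivered) its view discloses only the \emph{superposition} of the coins, not any specific realization.

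For termination, I would simply observe that \textsc{Quantum-get-core} mirrors the three-round classical \textsc{Get-core} procedure of \cite{10.5555/983102}, and all remaining operations ($C^n X$ followed by the measurement of $\Asf$) are local. Hence every good player terminates in $O(1)$ rounds.

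To bound the two coin-value probabilities, the next step is to apply the principle of deferred measurement: the joint distribution of outputs is unchanged if each $\ket{c_i}$ is instead measured immediately after line~\ref{line:coin}, yielding independent bits $\tilde{c}_i$ with $\Pr[\tilde{c}_i = 0] = 1/n$. Each subsequent CX-based \textsc{Quantum-multicast} then just propagates the classical value $\tilde{c}_i$ into every slot $\mathsf{coins}[i]$ that any good player successfully receives --- this is precisely how the entangled family $\sqrt{1/n}\ket{0\cdots 0} + \sqrt{1-1/n}\ket{1\cdots 1}$ generated by CX on $\ket{c_i}$ collapses under the final measurement. By \Cref{lem:core}, there is a core $C$ with $|C| \geq n-t > n/2$ such that every good player holds a non-$\perp$ entangled copy of $c_j$ for each $j \in C$, and so all of them measure the same $\tilde{c}_j$ in that slot. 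Since the adversary sees only $\ket{\Psi}$ (never a measurement outcome), its choice of $C$ is a function of $r_A$ alone and is therefore independent of $\{\tilde{c}_i\}_{i \in [n]}$. From here, property~2 follows from the sufficient event $\tilde{c}_i = 1$ for all $i$, of probability $(1-1/n)^n \geq 1/4$; property~3 follows from the sufficient event $\tilde{c}_j = 0$ for some $j \in C$, which by independence and $|C| > n/2$ occurs with probability at least $1 - \E_C\bigl[(1-1/n)^{|C|}\bigr] \geq 1 - (1-1/n)^{n/2} \geq 1 - e^{-1/2}$. Both quantities are $\Omega(1)$.

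The hard part will be rigorously justifying, inside the full-information model, the independence of the adversary-controlled core $C$ and the coin outcomes $\{\tilde{c}_i\}$. The argument above hinges on the observation that no measurement takes place during the message-exchange phase, so the pure state $\ket{\Psi}$ the adversary sees only encodes the \emph{distribution} of the eventual coins and is a deterministic function of $r_A$ alone. An alternative route I would consider, should this direct argument feel too delicate, is to invoke \Cref{thm:main} --- concretely, the asynchronous analogue of \Cref{lem:syncfail} --- to convert the quantum full-information execution into an equivalent classical private-channel execution in which the $\tilde{c}_i$'s are literally hidden randomness, at which point the product-distribution calculation goes through without any quantum reasoning.
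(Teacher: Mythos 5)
There is a genuine gap in your argument for property~3, and it sits exactly where you flagged the difficulty. Your independence claim rests on the assertion that ``no classical measurement outcome is ever exposed until good players simultaneously measure their ancilla,'' so that the adversary's choice of the core is a function of $r_A$ alone. But the protocol is asynchronous: good players finish \textsc{Quantum-get-core} and execute Line~\ref{line:measure} at different times, and the full-information adversary sees the post-measurement state of the system the moment the first player measures. Since a player's output is $1$ iff every coin it recorded is $1$, each such measurement reveals to the adversary whether that player's recorded set $V_i$ contains a $0$, and an adaptive Fail-stop adversary can then choose which messages the still-running players receive (and whom to halt) as a function of this information. So the sets $V_2,\ldots,V_m$, and hence the core $C_m=\cap_i V_i$, are \emph{not} independent of the coin realizations, and your product bound $1-(1-1/n)^{|C|}$ does not follow as stated. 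The paper's proof confronts this head-on: it strengthens the adversary to choose the $V_i$ arbitrarily subject to $|\cap_i V_i|\geq n/2$, observes that after $a_j$ terminates the adversary's only knowledge about $C_j$ is the single bit ``does $C_j$ contain a $0$,'' and argues that conditioned on this bit any two equal-size subsets of $C_j$ are equally likely to contain a $0$; hence among adversaries minimizing $\Pr[C_m \text{ contains } 0]$ there is one that ignores these bits, and only for that worst-case adversary is the choice of $C_m$ independent of the coins, after which the $1-(1-1/n)^{n/2}\geq 1-e^{-1/2}$ calculation is valid. Your proof is missing this conditioning/exchangeability step (or any substitute for it). Properties~1 and~2 are fine and match the paper.

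Your fallback route --- invoking the asynchronous reduction (\Cref{lem:async}, \Cref{thm:main}) to transfer the classical private-channel common-coin analysis of \cite{10.5555/983102} to the quantum full-information setting --- is sound and is essentially how the main text obtains the Fail-stop result; but the point of the appendix lemma you are proving is a \emph{direct} bit-efficient construction, and as written your direct argument does not close the adaptivity loophole that the reduction (or the paper's conditional-symmetry argument) is there to handle.
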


\begin{proof}
    (1) Procedure \textsc{Quantum-multicast} takes one round. Since there are at
    least $n-t$ good players, every good player will return from
    $\textsc{Quantum-get-core}$ in $O(1)$ rounds. Other steps in
    $\textsc{Quantum-common-coin}$ also take $O(1)$ rounds. Thus all good players 
    terminate in $O(1)$ rounds.

    (2) Since every player initially toss $1$ with probability~$1-1/n$, the probability that
    all good players toss $1$ is $(1-1/n)^n\geq 1/4$ for $n\geq 2$. Thus 
    \(
    \Pr[\text{all good players output }1]\geq 
    \Pr[\text{all good players initially toss }1]\geq 1/4.
    \)

    (3) Sort players by the time they terminate
    \textsc{Quantum-common-coin}. Let $a_i$ be the $i$-th player that terminates
    \textsc{Quantum-common-coin} and $V_i$ be the set of players whose coin is
    recorded by player $a_i$. By Lemma \ref{lem:core}, $|\cap_{i=1}^m V_i|\geq
    n/2$ where $m\geq n-t$ denote the number of good players. We strengthen the
    adversary's power as arbitrarily choosing sets $V_1, \ldots, V_m$ as long as
    the condition $|\cap_{i=1}^m V_i|\geq n/2$ is satisfied. For a set $S$ of players, We say \emph{$S$
    contains $0$} if there exists some player $j\in S$ such that measuring the
    player $j$'s quantum coin $\ket{c}$ (Line~\ref{line:coin}) will result in
    $0$. Before the first player $a_1$ measures its ancilla
    (Line~\ref{line:measure}), all coins are in uniform superposition, so the
    adversary's choice of $V_1$ will not depend on quantum coin values. When
    $a_i$ measures its ancilla (Line~\ref{line:measure}), whether $V_i$ contains
    $0$ becomes known to the full-information adversary,     
    so the adversary's choice of $V_i$ will depend on the knowledge of whether
    $V_1, \ldots, V_{i-1}$ contain $0$. 

    Let $C_j:=\cap_{i=1}^j V_i$ denote the core set of players $a_1,\ldots,a_j$.
    The first observation is that \(\Pr[\text{all good players output }0]\geq
    \Pr[C_m\text{ contains }0]\). Next, we claim that $\Pr[C_m\text{ contains
    }0]=\Omega(1)$: Note that after player $a_j$ terminates
    \textsc{Quantum-common-coin}, the adversary's only knowledge about $C_j$ is
    whether $C_j$ contains $0$. Conditioned on whether $C_j$ contains $0$, any
    two subsets $X, Y\subseteq C_j$ of the same size still have equal
    probability of containing $0$. 
    Thus among applicable adversaries that minimize $\Pr[C_{m}\text{ contains }0]$, there exists an adversary $\mathcal{A}$ whose strategy does not depend on whether $C_1,\ldots, C_{m-1}$ contain $0$, which further implies that $\mathcal{A}$'s choice of $C_{m}$ is independent of coin values.
    Combined with the fact that
    $|C_m|\geq n/2$, we have 
    \[\Pr[C_{m}\text{ contains
    }0]\geq 1-\prod_{p\in C_m} \Pr[\text{player }p\text{'s coin is }1]\geq
    1-\left(1-\frac{1}{n}\right)^{\frac{n}{2}}\geq 1-e^{-\frac{1}{2}}.\]
    
\end{proof}

Plugging the above quantum common-coin protocol into the classical BA protocol in
Algorithm 44 of \cite{10.5555/983102}, we get an $O(1)$-round quantum full-information BA
protocol. Trivially, the quantum protocol
have the same round and communication complexity. 

Finally, we analyze the local resource consumption of quantum protocol. The
local computation of \textsc{Quantum-common-coin} needs $O(n)$ space and
$O(n^2)$ time, which is the same as the classical protocol. The (qu)bit length of
each message is $O(n)$, which is the same as the bit length in the
classical protocol. In the quantum agreement protocol, we additionally require
players holding all qubits from previous invocations of
\textsc{Quantum-common-coin} before decision to avoid unwanted collapse of the
entanglement. However, since the protocol terminates in expected $O(1)$ rounds,
it does not change the overall complexity of the BA protocol.

\subsection{Byzantine case}
We quantize the classical protocol from \cite{bangalore2018almost}, which also
follows the framework of reducing BA protocols to common-coin protocols.
Their common-coin protocol is based on the design of a \emph{shunning
asynchronous verifiable secret sharing (SAVSS)} scheme. SAVSS is a two-phase
protocol (\textsc{Sh}, \textsc{Rec}) against an Byzantine adversary that can
corrupt up to $t$ players. In \textsc{Sh} phase, a designated player $D$ called
dealer shares a secret $s$ among players and players joint verify whether $s$ is
can be uniquely reconstructed later. If $D$ is good, then the secrecy of $s$ is
preserved until \textsc{Rec} phase. In \textsc{Rec} phase, all players jointly
reconstruct a unique secret $s$ or at least a certain number of bad players are
shunned by some good players. 

Since all other components of the protocol are deterministic, we only need to
quantize the SAVSS scheme. Also, the SAVSS scheme is only used to share a
uniformly random value. Thus it is sufficient to construct a quantum SAVSS
protocol sharing a uniformly random value in $\mathbb{F}_p$ where $p>n$ is a
prime number against full-information Byzantine adversary (\Cref{algo:avss}).
Note that in the algorithm, \emph{broadcast} refers to the reliable broadcast
primitive of Bracha \cite{bracha1987asynchronous}, and Lines~\ref{line:fol1} and \ref{line:fol2} refer to
the corresponding parts in Fig. 1 of \cite{bangalore2018almost}.

\begin{algorithm}[!htb]
    \caption{Quantum SAVSS protocol: code for player $i$.} \label{algo:avss}
    \begin{algorithmic}[1]
        \Procedure{\textsc{Quantum-sh}}{}
        \If {$i$ is dealer}
        \State prepare in new registers $\Dsf\Ssf_1\Ssf_2\cdots \Ssf_n$ a quantum state
        \[
        \quad \quad \sum_{\mathrm{symmetric} f\in \mathbb{F}_p^{\leq t}[x,y]} \ket{f(x,y)}_{\Dsf}\ket{f(x,1)}_{\Ssf_1}\ket{f(x,2)}_{\Ssf_2}\cdots\ket{f(x,n)}_{\Ssf_n}
        \]
        \For{$j\in [n]$}{ send $\Ssf_j$ to player $j$}\EndFor 
        \EndIf
        \Upon{$\Ssf_i$ is received from dealer} 
        \State prepare an empty register $\Qsf_i:=\Qsf_i^{(1)}\Qsf_i^{(2)}\cdots \Qsf_i^{(n)}$ 
        \State apply $U_{eval}$ on $\Ssf_i\Qsf_i$ where
        \[
        \quad \quad U_{eval}\ket{g(x)}_{\Ssf_i}\ket{0}_{\Qsf_i}:=
        \ket{g(x)}_{\Ssf_i}\ket{g(1)}_{\Qsf_i^{(1)}}\ket{g(2)}_{\Qsf_i^{(2)}}\cdots\ket{g(n)}_{\Qsf_i^{(n)}}
        \]
        \For{$j\in [n]$}
        \State prepare an empty register $\Rsf_i^{(j)}$
        \State apply $CX$ on $\Qsf_i^{(j)}\Rsf_i^{(j)}$ where 
        \(
            CX \ket{a}\ket{b} = \ket{a}\ket{a+b}
        \)
        \State send $\Rsf_i^{(j)}$ to player $j$
        \EndFor 
        \EndUpon
        \Upon{$\Rsf_{j}^{(i)}$ is received from player $j$}
        \State apply $CX^{-1}$ on $\Qsf_i^{(j)}\Rsf_j^{(i)}$ and measure $\Rsf_j^{(i)}$ \label{line:measureR}
        \If{measure result is $0$}
        {broadcast message ``player $j$ is ok''}
        \EndIf
        \EndUpon
        \State follow the original protocol to verify sets $\mathcal{V},\mathcal{V}_1,\ldots,\mathcal{V}_n$ and populate set $\mathcal{W}_{i,\mathrm{sid}}$ \label{line:fol1}
        \State {\bf return} $\Ssf_i$ as the share
        \EndProcedure
        \item[]
        \Procedure{\textsc{Quantum-rec}}{$\mathsf{S}_i$}
        \State measure $\Ssf_i$ and broadcast the measurement result \label{line:measureS}
        \State follow the original protocol \label{line:fol2}
        \EndProcedure
    \end{algorithmic}    
\end{algorithm}

We briefly explain why \Cref{algo:avss} satisfies the SAVSS properties
(Definition 2.1 of \cite{bangalore2018almost}). 
\begin{itemize}
\item {\bf Termination:} Follow directly from the original protocol. 
\item {\bf Correctness:} Let $f_i(x)$ be the polynomial obtained from measuring
$\Ssf_i$. Observe that if good player $i$ gets $0$ when measuring
$\Rsf_j^{(i)}$ (Line~\ref{line:measureR}), then $f_i(j)=f_j(i)$ must hold. Then the correctness follows
from the correctness of the original protocol. 
\item {\bf Privacy:} (i) Since good players will not measure the shares until
\textsc{Quantum-rec}, the adversary can operate on at most $t$ shares, which is
not enough to collapse the value of $f(0, 0)$. (ii) Measuring $\Ssf_i$ in
\textsc{Quantum-rec} phase (Line~\ref{line:measureS}) will not give full-information adversary more
knowledge since the player has already decided to broadcast the results.
\end{itemize}

Plugging the quantum SAVSS protocol into the original classical protocol, we get
a quantum asynchronous full-information BA protocol with the same round and communication complexity. Finally, we analyze
the local resource consumption. Since the quantum SAVSS protocol is obtained
simply by purifying all local computation, so it has the same local time/space
complexity and length of messages as the classical SAVSS protocol. Moreover, all
qubits are measured after one invocation of SAVSS, so no additional qubits need
to be preserved between invocations. Thus the quantum BA protocol has the
same complexity as the classical BA protocol. 

\end{document}